\documentclass[sigconf,nonacm]{acmart}

\usepackage{tikz}
\usepackage{booktabs}
\usepackage{graphicx}

\setcopyright{none}
\acmConference[Under review]{Under review}{}{}
\acmYear{2026}
\newtheorem{remark}{Remark}

\newcounter{assumption}
\newenvironment{assumption}{\refstepcounter{assumption}\par\medskip\noindent\textbf{A\theassumption.}\enspace}{\par\medskip}
\newenvironment{assumptionstar}[1]{\par\medskip\noindent\textbf{#1.}\enspace}{\par\medskip}

\newcommand{\E}{\mathrm E}
\newcommand{\iidsiminline}{{\scriptstyle\stackrel{\mathrm{iid}}{\sim}}}
\newcommand{\iidsim}{{\stackrel{\mathrm{iid}}{\sim}}}
\newcommand{\ind}{\perp\!\!\!\perp}
\newcommand{\R}{\mathrm{I}\!\mathrm{R}}
\newcommand{\nn}{\nonumber}
\DeclareMathOperator{\var}{var}
\DeclareMathOperator{\cov}{cov}

\begin{document}

\title[EARL: Bipartite Experiments with Partial Assignment]{EARL: Exposure- and Allocation-Reweighted Linear Estimator for Bipartite Experiments with Partial Assignment}

\author{Alexey Kurennoy}
\affiliation{%
  \institution{}
  \city{}
  \country{}
}
\email{}

\begin{abstract}
    Bipartite A/B tests are experiments in which treatment is randomised over one set of units, while outcomes are measured on another. For example, an online marketplace may test a new pricing algorithm on a
    random subset of \emph{items}, while the outcome of interest, say, purchase satisfaction, is measured on \emph{customers}, each of whom interacts with many items.
  
    Existing methods for analysing bipartite experiments assume that every randomisation unit is assigned to treatment or control. In practice, often only a subset participates: platforms cap rollout risk, reserve
    holdout groups, and split their population across concurrent tests. Ignoring the unassigned units biases estimation, while including them requires care.
  
    We construct an unbiased linear estimator for bipartite experiments with partial assignment. Observations must be reweighted not only by the (centred) share of treated connections among participating ones (the
    exposure) but also by the number of participating connections, each inverse-weighted by its participation propensity, so that units whose connections are well covered carry proportionally more weight. The
    resulting estimator, EARL (Exposure- and Allocation-Reweighted Linear), is unbiased, consistent, and asymptotically normal; we devise two asymptotic variance estimators and show that it has minimal variance in
    a natural class of linear estimators. EARL remains unbiased \emph{regardless of the experience the unassigned units receive}, as long as they contribute to expected outcomes additively; in particular, they can be
    allocated to other, non-overlapping tests. Our theory is complemented with a simulation study on two public datasets, in which EARL attains up to six times lower error than the strongest existing baseline and,
    in some configurations, over an order of magnitude lower than Horvitz-Thompson-style alternatives.
\end{abstract}

\begin{CCSXML}
<ccs2012>
 <concept>
  <concept_id>10002950.10003648.10003662</concept_id>
  <concept_desc>Mathematics of computing~Probabilistic inference problems</concept_desc>
  <concept_significance>500</concept_significance>
 </concept>
 <concept>
  <concept_id>10002944.10011122.10002945</concept_id>
  <concept_desc>General and reference~Experimentation</concept_desc>
  <concept_significance>500</concept_significance>
 </concept>
</ccs2012>
\end{CCSXML}
\ccsdesc[500]{Mathematics of computing~Probabilistic inference problems}
\ccsdesc[500]{General and reference~Experimentation}

\keywords{A/B testing, bipartite experiments, interference, average treatment effect, partial assignment}

\maketitle

\section{Introduction}\label{sec:introduction}

Online platforms routinely evaluate changes through randomised experiments (A/B tests). The textbook design randomises the same units on which outcomes are measured. In many important applications, however, the intervention is applied to one population while its effects materialise on another. A marketplace changes the pricing mechanism of \emph{items}, but cares about the satisfaction of \emph{customers} who interact with those items; an advertising platform modifies the auction configuration of \emph{ad campaigns}, but measures the experience of \emph{users} exposed to many campaigns; a video service alters the encoding of \emph{titles}, but tracks the engagement of \emph{viewers}. Experiments of this kind are called \emph{bipartite}: one set of units~--- called \emph{randomisation} or \emph{diversion} units~--- is randomly split between treatment and control, while outcomes are recorded on a different set of \emph{analysis} (or \emph{outcome}) units, and the two sets are connected by a bipartite \emph{experiment graph} \cite{zigler2021bipartite,harshaw2023erl,doudchenko2020causal}. Because each analysis unit is typically connected to many randomisation units, its outcome depends on the whole profile of treatments among its connections; this dependence of each outcome on the treatments of many units invalidates the naive difference-in-means analysis and calls for dedicated estimators.

Existing methods for analysing bipartite experiments almost universally presume that \emph{every} randomisation unit is assigned to either treatment or control. Industrial reality is different: often only a subset of randomisation units participates in any given experiment \cite{kohavi2020trustworthy,tang2010overlapping}. There are at least three reasons. First, platforms limit the risk of changes by enrolling only a fraction (say, 20\%) of units. Second, they run multiple concurrent experiments and sometimes divide the population of randomisation units between them, so that different tests ``own'' different slices of the population. Third, holdout groups are often reserved for long-term measurement. In all these cases, the units left out of the experiment~--- we call them \emph{unassigned} or \emph{unallocated}~--- do not disappear: they remain connected to the analysis units through the experiment graph and keep influencing the measured outcomes. A customer's satisfaction depends on all items they interact with, not only on the items enrolled in our pricing test.

It is tempting to analyse such an experiment by simply deleting the unassigned units from the graph and applying a standard estimator to the reduced graph. This, however, produces a systematically \emph{biased} answer. The estimand of interest is the \emph{global average treatment effect} (GATE)~--- the change in average outcomes if the treatment were rolled out to \emph{all} randomisation units, compared with none. An estimator computed on the reduced graph does not ``know'' that each analysis unit has further, currently unassigned connections that will also switch to the treatment experience after the rollout; as we show in Section~\ref{subsec:estimator}, under the same linear-contributions model that makes EARL unbiased, the exposure-reweighted linear (ERL) estimator of \citet{harshaw2023erl} applied to the reduced graph converges to $q\cdot GATE$, where $q$ is the participation rate, thus having an attenuation of up to 90\% at realistic allocation rates. The problem is a special case of estimation under a misspecified interference graph, which is known to introduce bias that grows with the divergence between the assumed and true graphs \cite{weinstein2026causal}; see \cite{savje2024misspecified} for the general treatment of misspecified exposure mappings. Inverse-probability-weighting (IPW) estimators, on the other hand, can be adapted to remain unbiased, but their weights (and with them the worst-case variance) grow exponentially in the number of connections, and the problem deteriorates further under partial assignment, because the probability of observing an extreme exposure shrinks geometrically with the participation rate.

In this paper, we show that for unbiased and practical estimation of the treatment effect under partial assignment, one has to weight the observations not only by the (centred) exposure of the respective analysis units but also by the inverse-propensity-weighted count of their participating connections, thereby tilting the estimator towards analysis units whose connections are well covered by the experiment while restoring, on average, each unit's full connectivity. We provide the exact recipe for this reweighting, which leads to a novel ERL-type estimator for bipartite experiments with partial assignment that we call the EARL (Exposure- and Allocation-Reweighted Linear) estimator. We prove its unbiasedness, consistency, and asymptotic normality, and devise estimators of its asymptotic variance, which together enable practical inference. We further prove that EARL has the smallest variance in a natural class of linear reweighting estimators, uniformly over the outcome models we consider.

The contributions of this paper are as follows.
\begin{itemize}
    \item \textbf{A novel unbiased estimator for bipartite experiments with partial assignment.} We construct an ERL-type estimator that is unbiased under partial assignment and remains unbiased \emph{whatever experience the unassigned units receive} (e.g., participation in other, non-overlapping experiments), as long as they contribute to the expected outcomes additively It formally reduces to the standard ERL estimator at full assignment, and is provably the minimum-variance member of a natural class of linear estimators. We establish a bias bound without any linearity assumptions, and exact unbiasedness, consistency, and asymptotic normality under a linear-contributions model.
    \item \textbf{Inference machinery.} We devise a closed-form, exactly unbiased estimator of EARL's variance in the spirit of \citet{harshaw2023erl}, and analyse the randomisation-inference variance estimator of \citet{shi2025scalableanalysisbipartiteexperiments} under partial assignment. For the latter, we identify a subtlety specific to partial assignment: the validity requires a null hypothesis stating that expected outcomes are unaffected by \emph{allocation} as well as assignment; we supply a full, self-contained proof under this null.
    \item \textbf{A reproducible semi-synthetic evaluation.} Following the experimental designs of \citet{doudchenko2020causal} and \citet{harshaw2023erl}, we benchmark EARL on a synthetic graph and on two public bipartite graphs (Amazon product reviews and MovieLens), varying the participation rate from 10\% to 100\%. EARL is unbiased in all linear-response scenarios and reduces the estimation error of the strongest existing baseline by up to a factor of six and that of Horvitz--Thompson alternatives, in some configurations, by more than an order of magnitude. We also map out, and explain, the low-allocation regime in which the biased reduced-graph estimator retains a lower total error.
\end{itemize}

The paper is structured as follows. Section~\ref{sec:literature} reviews related work. Section~\ref{subsec:notation} describes our setting and introduces notation. Section~\ref{subsec:estimator} introduces the EARL estimator, illustrates on a worked example why reduced-graph estimation is biased, and states the optimality result. Section~\ref{subsec:analysis} presents our assumptions together with the unbiasedness and consistency theorems. Section~\ref{subsec:inference} develops inference: asymptotic normality and two asymptotic-variance estimators. Section~\ref{sec:simulations} reports the simulation study. Section~\ref{sec:conclusion} concludes. All proofs are collected in the Appendix.

\section{Related Literature}\label{sec:literature}

\textbf{Bipartite experiments.} The bipartite interference framework was formalised by \citet{zigler2021bipartite}, motivated by environmental applications in which interventions on power plants affect health outcomes in surrounding areas; see also \citet{hudgens2008toward,aronow2017estimating,eckles2017design} for the general treatment of interference in experiments. In online platforms, bipartite experiments arise in marketplaces and advertising systems, and a line of work addresses their design: correlation-clustering designs \cite{pouget2019variance}, exposure-optimised designs \cite{harshaw2023erl}, and cluster designs for one-sided randomisation \cite{brennan2022cluster}. \citet{bajari2021multiple} study multiple randomisation designs, which randomise both sides of the graph simultaneously. These design questions are orthogonal to ours; we take the (Bernoulli) design as given and focus on estimation when part of the population is not enrolled.

\textbf{Two families of estimators.} Existing estimators of the GATE in bipartite experiments fall into two broad groups. \emph{IPW (Horvitz--Thompson) estimators} build on exposure mappings \cite{aronow2017estimating}: each analysis unit's treatment configuration is summarised by an exposure level, and outcomes are reweighted by the inverse probability of the observed exposure. \citet{zigler2021bipartite} and \citet{doudchenko2020causal} develop this approach for bipartite structures, the latter using generalised propensity scores \cite{hirano2004propensity}. The appeal of IPW is minimal outcome-model assumptions; its weakness is variance: the probability of observing a ``pure'' (all-treated or all-control) exposure decays geometrically in the number of connections, so the weights explode. \emph{ERL-type estimators} instead posit a response structure that is linear in the exposure. \citet{harshaw2023erl} introduce the exposure-reweighted linear (ERL) estimator, prove its unbiasedness, consistency and asymptotic normality under a linear exposure-response model, and construct an unbiased variance estimator. 
\citet{lu2025design} give a design-based analysis with a conservative variance estimator and covariate adjustment. 
All of the above assume full assignment of the randomisation units.

\textbf{Partial participation.} Closest to our work, \citet{tan2025estimating} study bipartite experiments with \emph{partial eligibility}: a fixed, known subset of treatment-side units is eligible for randomisation, while the remaining units are never treated. They define total-effect estimands restricted to the eligible side (PTTE) and to spillovers on ineligible units (STTE), and estimate them with an ensemble of outcome models, generalised propensity scores, and machine-learning components, with bootstrap inference. Our setting differs in three substantive ways. First, in our framework participation itself is randomised (with probability $q$), which is common when an experimentation platform carves the population into slices for concurrent tests \cite{tang2010overlapping}; eligibility in \citet{tan2025estimating} is a deterministic attribute. Second, and more importantly, we make no assumption about what the unassigned units experience, for example, they may be exposed to \emph{other} interventions, as happens when they are allocated to concurrent experiments, whereas ineligible units in \citet{tan2025estimating} are never directly assigned treatment, although all units continue interacting and may be affected through spillovers. Third, our target is the classical GATE for a rollout to the \emph{whole} population, and our machinery is design-based with closed-form weights, exact unbiasedness, central limit theorems, and analytic variance estimators, rather than model-based ensembles with bootstrap uncertainty. The two approaches are thus complementary: theirs favours flexible outcome modelling under fixed eligibility, ours favours transparent, assumption-light inference under randomised partial assignment.

\textbf{Graph misspecification.} Running a full-assignment estimator on the reduced graph is equivalent to analysing the experiment under a misspecified interference graph. \citet{weinstein2026causal} bound the bias arising from misspecified interference networks and show it grows with the divergence between the assumed and true networks; \citet{savje2024misspecified} studies when misspecified exposure mappings still permit meaningful estimation. Our contribution can be seen as replacing an (incorrectly) reduced graph with a weighting scheme on the full graph that restores unbiasedness at a modest variance cost.

\section{Proposed Estimator}\label{sec:estimator}

\subsection{Problem Setup and Notation}\label{subsec:notation}

Let $\mathcal A$ and $\mathcal R$ be the populations of $N$ analysis and $M$ randomisation units, respectively, and let $\mathcal C\subset \mathcal A \times \mathcal R$ be the set of connections between them. Together, the triple $\langle \mathcal A,\,\mathcal R,\,\mathcal C\rangle$ forms a bipartite graph that is commonly referred to as the \emph{experiment graph}. In the literature, analysis units are also called \emph{outcome} or \emph{measurement} units, and randomisation units are also called \emph{diversion}, \emph{treatment}, or \emph{intervention} units \cite{zigler2021bipartite,harshaw2023erl,doudchenko2020causal,tan2025estimating}. In our running marketplace example, the randomisation units are items, the analysis units are customers, and a connection $(a,\,r)$ records that customer $a$ interacts with item $r$. We use $d_\mathcal{A}$ to denote the maximum number of randomisation units connected to a given analysis unit and $d_\mathcal{R}$~~---  the maximum number of analysis units connected to a given randomisation unit.

Randomisation units are allocated for participation in the experiment independently at random with probability $q\in(0,\,1)$. Let $\mathbf{S} =(S_1,\,\ldots,\,S_M)$ be the vector of allocation indicators: $S_r$ is 1 if the $r$-th randomisation unit participates in the experiment and $0$ otherwise, $S_r\iidsiminline \mathrm{Bernoulli}(q)$. Randomisation units participating in the experiment are randomly and independently assigned to the treatment group with probability $p\in(0,\,1)$. Let $\mathbf Z=(Z_1,\,\ldots,\,Z_M)$ be the vector of assignment indicators: $Z_r$ is 1 if the $r$-th unit is (or would be, should it participate) assigned to the treatment, $Z_r\iidsiminline \mathrm{Bernoulli}(p)$. In industrial experimentation systems both indicators are computed by deterministic hashing of unit identifiers \cite{tang2010overlapping,kohavi2020trustworthy}, so $Z_r$ is well defined (though immaterial) even for units that do not participate. We assume that allocation and assignment are performed independently of each other, i.e., $\mathbf S \ind \mathbf{Z}$. While existing methods for analysing bipartite experiments assume that all randomisation units are assigned (i.e., $q=1$), in this paper we consider the partial assignment case ($q < 1$). It means that we have three groups of randomisation units: unassigned ($S_r = 0$), treatment ($S_r=1$ and $Z_r=1$), and control ($S_r=1$ and $Z_r=0$):
\begin{center}
    \begin{tabular}{cccc}
    \hline
    Group & \multicolumn{2}{c}{Indicator Values} & Expected Count\\
    & $S_r$ & $Z_r$ & \\
    \hline
    Unassigned & 0 & any & $(1-q) M$\\
    Treatment & 1 & 1 & $pq M$\\
    Control & 1 & 0 & $(1-p)q M$\\
    \hline
\end{tabular}
\end{center}
Note that even though unassigned units do not participate in the experiment, they are still connected to analysis units via the experiment graph and influence the outcomes associated with analysis units. In the marketplace example: a customer's satisfaction is shaped by all the items they interact with, including items that our pricing test never touches.

For a given analysis unit $a\in\mathcal A$, let $Y_a$ be the outcome (random) variable (which we assume to be integrable) and let $e_a(\mathbf{s},\,\mathbf{z})$
be the expected outcome conditional on $\mathbf S = \mathbf s$ and $\mathbf Z = \mathbf z$, i.e.,
\begin{equation}\label{eq:ea}
    e_a(\mathbf{s},\,\mathbf{z}) = \E[Y_a\mid \mathbf{S} = \mathbf{s},\,\mathbf{Z}=\mathbf{z}],\quad \mathbf{s}\in \{0,\,1\}^M,\, \mathbf{z}\in \{0,\,1\}^M.
\end{equation}
Furthermore, for each $a\in\mathcal{A}$, define the function $\mu_a$ as follows,
\begin{eqnarray}\label{eq:ma}
    \mu_a(x,\,y) &=& \E_{S_r\iidsim \mathrm{Bern}(x),\,Z_r\iidsim \mathrm{Bern}(y),\,\mathbf S\ind \mathbf Z}[e_a(\mathbf S,\,\mathbf Z)]\nonumber\\
    &=& \sum_{\textbf{s},\textbf{z}\in \{0,\,1\}^M}\prod_{r=1}^Mx^{s_r}(1-x)^{1-s_r}y^{z_r}(1-y)^{1-z_r}e_a(\textbf{s},\,\textbf{z}),
\end{eqnarray}
$x\in[0,\,1]$, $y\in[0,\,1]$, where we adopt the convention $0^0=1$ (so that, e.g., the expression is well defined at $x\in\{0,\,1\}$). The function $\mu_a$ is the expected outcome of unit $a$ in a hypothetical experiment with allocation rate $x$ and treatment rate $y$; it is a polynomial in $(x,\,y)$ and thus infinitely differentiable. Finally, let the function $\mu$ be the average of $\mu_a$ over the population of analysis units $\mathcal{A}$, i.\,e.,
\begin{equation}\label{eq:mu}
    \mu(x,\,y) = \frac 1 N \sum_{a\in\mathcal{A}} \mu_a(x,\,y),\qquad x\in[0,\,1],\, y\in[0,\,1].
\end{equation}

The goal of the experimenter is to estimate the \emph{global average treatment effect},
\begin{equation}\label{eq:gate}
    GATE = \frac 1 N\sum_{a\in\mathcal A} \left(\mu_a(1,\,1) - \mu_a(1,\,0)\right) = \mu(1,\,1) - \mu(1,\,0),
\end{equation}
i.e., the contrast between average expected outcomes when \emph{all} randomisation units receive treatment and when all of them receive control. This is the quantity relevant for the launch decision: it describes the world after a full rollout.

In what follows, $\mathcal D(a)$ denotes the set of randomisation units that are connected to the analysis unit $a$, i.e.,
\begin{equation*}
    \mathcal{D}(a) = \{r\in\mathcal{R}\colon (a,\,r)\in \mathcal{C}\}.
\end{equation*}
For example, if customer $a$ interacts with five items, of which two are allocated to the experiment and one of those two is treated, then $|\mathcal D(a)|=5$ and the realised allocation/assignment pattern on $\mathcal D(a)$ has two participating units, one of which is treated.

All random variables are assumed to be defined on a common probability space $(\Omega,\,\mathcal F,\,\Pr)$. In all asymptotic statements, the design parameters $p$ and $q$ are held fixed as the number of units grows.

\subsection{Exposure- and Allocation-Reweighted Linear Estimator}\label{subsec:estimator}

Let $H_a$ be the so-called exposure of analysis unit $a$, i.e., the share of treated units among the randomisation units $a$ is connected to,
\begin{equation}\label{eq:Ha}
    H_a = \frac{1}{|\mathcal{D}(a)|}\sum_{r\in\mathcal{D}(a)} Z_r.
\end{equation}
When all randomisation units are assigned ($q=1$), the standard ERL estimator \cite{harshaw2023erl} is defined as the average of outcomes $Y_a$ weighted by the standardised exposures,
\begin{equation}{\label{eq:erl}}
    \hat\tau_{ERL} = \frac 1 N \sum_{a\in\mathcal A} \frac{H_a - \E[H_a]}{\var \left(H_a\right)}Y_a =\frac{1}{N}\sum_{a\in\mathcal A}\sum_{r\in\mathcal{D}(a)}\frac{Z_r - p}{p(1-p)}Y_a.
\end{equation}
The interpretation of this definition is that units with a greater share of treated randomisation units among their connections contribute more to the effect estimate. As demonstrated in \cite{harshaw2023erl}, ERL is unbiased for GATE under a linear exposure-response assumption.

Under partial assignment ($q<1$) two additional quantities describe how well the experiment covers the connections of $a$. Let $G_a$ be the share of participating units among the connections of $a$ (the \emph{allocation coverage}),
\begin{equation}\label{eq:Ga}
    G_a = \frac{1}{|\mathcal{D}(a)|}\sum_{r\in\mathcal{D}(a)} S_r,
\end{equation}
and let $\tilde H_a$ be the exposure computed \emph{within} the participating connections,
\begin{equation}\label{eq:Hta}
    \tilde H_a = \frac{\sum_{r\in\mathcal{D}(a)} S_rZ_r}{\sum_{r\in\mathcal{D}(a)} S_r}\quad\textup{if } G_a>0.
\end{equation}
For isolated units ($\mathcal D(a)=\emptyset$) we set $H_a = G_a = 0$.
The estimator we propose weights each observation by the product of the allocation coverage and the centred within-experiment exposure, scaled by the inverse allocation rate:
\begin{eqnarray}\label{eq:earl}
    \hat \tau_{EARL} &=& \frac 1 N \sum_{a\in \mathcal A} |\mathcal D(a)|\,\frac{G_a\,(\tilde H_a - p)}{q\,p(1-p)}\, Y_a \nonumber\\
    &=& \frac 1 N \sum_{a\in \mathcal A} \sum_{r\in\mathcal{D}(a)}\frac{S_r}{q}\cdot\frac{Z_r - p}{p(1-p)}\, Y_a,
\end{eqnarray}
with the convention that the first form is $0$ whenever $G_a=0$ (as the second form makes explicit). We call it the Exposure- and Allocation-Reweighted Linear (EARL) estimator. In comparison with the standard ERL estimator \eqref{eq:erl}, EARL (i) computes exposures only over the connections actually enrolled in the experiment, and (ii) additionally tilts the weights towards those analysis units that have a high share of participating randomisation units among their connections, upweighting every participating connection by the inverse of its allocation propensity. Setting $q=1$ in \eqref{eq:earl} formally recovers ERL; throughout the paper, however, we treat the partial-assignment case $q\in(0,\,1)$, as full assignment is covered by the standard ERL theory \cite{harshaw2023erl}.

Three practical remarks are in order. First, EARL only requires the assignment indicators $Z_r$ of \emph{participating} units ($S_r=1$), which are always observed; the ``would-be'' assignments of unassigned units are not needed. Second, the estimator is linear in the outcomes and its weights depend only on design quantities ($q$, $p$) and the experiment graph, so it can be computed in a single pass over the graph edges~--- an $O(|\mathcal C|)$ computation that fits distributed query engines, similarly to \cite{shi2025scalableanalysisbipartiteexperiments}. Third, as will be demonstrated in Section~\ref{subsec:analysis}, EARL is unbiased (and consistent) for GATE \eqref{eq:gate} \emph{regardless of the actual experience the unassigned units receive}, provided that experience enters the expected outcomes additively (Assumptions~A\ref{a:mu} and A\ref{a:4} below). They can be, for example, exposed to the treatment (as in the case of a ``backtest'') or allocated to other, disjoint experiments. This has important practical consequences: the experimenter need not ensure that unassigned units keep receiving the control experience, and thus retains much greater experimentation throughput.

\medskip
\noindent \textbf{Example.}
Consider an experiment in which unassigned randomisation units are allocated to a concurrent, non-overlapping test. A randomisation unit adds $0$, $1$, or $\alpha$ to the outcome of each connected analysis unit depending on whether the randomisation unit belongs to control, treatment, or the group of unassigned units, respectively:
\begin{equation*}
    Y_a = \sum_{r\in\mathcal D(a)}\bigl(\alpha(1-S_r) + S_rZ_r\bigr).
\end{equation*}
The effect $\alpha$ of unassigned units does not have to be 0, for example, because of their participation in a concurrent A/B test. The GATE \eqref{eq:gate} in this case equals $N^{-1}\sum_{a\in\mathcal{A}}|\mathcal{D}(a)|$.
One natural way to apply the existing full-assignment machinery to GATE estimation under partial assignment is to ignore the presence of unassigned units, i.e., to exclude them and compute the standard ERL estimator \cite{harshaw2023erl} on the reduced graph,
\begin{equation}\label{eq:erl-drop}
    \hat \tau_{drop} = \frac 1 N \sum_{a\in\mathcal{A}}\sum_{r\in\mathcal{D}(a)}S_r\frac{Z_r-p}{p(1-p)}Y_a.
\end{equation}
It is easy to see that this approach leads to a biased estimator in this example. Using the independence of the indicators and $\E[Z_r-p]=0$, all cross terms ($r\neq r'$) and all $\alpha$-terms vanish, and
\begin{eqnarray*}
    \E[\hat\tau_{drop}] &=& \frac {1}{Np(1-p)}\sum_{a\in\mathcal{A}} \sum_{r\in\mathcal{D}(a)}\E\left[S_r(Z_r-p)S_rZ_r  \right]\\
    &=& \frac 1 {Np(1-p)}\sum_{a\in\mathcal{A}}\left|\mathcal{D}(a)\right|q\,p(1-p)\\
    &=&q \cdot GATE.
\end{eqnarray*}
The reduction of the graph leads to an underestimation of the true treatment effect: the estimator does not ``know'' that the actual number of connections is larger and that the analysis units will be affected to a greater extent when the treatment is rolled out to \emph{all} randomisation units of the original graph. In contrast, the EARL estimator \eqref{eq:earl} accounts for the allocation propensity and has no bias:
\begin{eqnarray*}
    \E[\hat\tau_{EARL}] &=& \frac {1}{Nqp(1-p)}\sum_{a\in\mathcal{A}} \sum_{r\in\mathcal{D}(a)}\E\left[S_r(Z_r-p)S_rZ_r  \right]\\
    &=& \frac 1 {Nqp(1-p)}\sum_{a\in\mathcal{A}}\left|\mathcal{D}(a)\right|q\,p(1-p)
    = GATE.
\end{eqnarray*}
Note that the computed expectations of $\hat \tau_{drop}$ and $\hat\tau_{EARL}$ above do not depend on $\alpha$. It means that even if the unassigned units receive the same experience as control ($\alpha=0$), ignoring them still leads to biased estimation in this example; and on the other hand, our proposed estimator remains unbiased regardless of the specific value of $\alpha$.

\medskip
The example suggests that the bias of the reduced-graph estimator could be repaired by dividing by $q$~--- and indeed, EARL \eqref{eq:earl} equals $\hat\tau_{drop}/q$. This simple correction is, however, not ad hoc: the next proposition shows that among \emph{all} linear reweighting estimators that correct the allocation in an unbiased way, EARL is the most efficient, uniformly over the class of outcome models introduced in Section~\ref{subsec:analysis}. Consider the family
\begin{equation}\label{eq:class}
    \hat\tau_{c} = \frac 1 N \sum_{a\in \mathcal A} \sum_{r\in\mathcal{D}(a)}c(S_r)\,\frac{Z_r - p}{p(1-p)}\, Y_a,\qquad c\colon\{0,\,1\}\mapsto\R,
\end{equation}
which contains the reduced-graph ERL ($c(s)=s$), EARL ($c(s)=s/q$), and, e.g., the estimator with the standardised-allocation weight $c(s)=(s-q)/(q(1-q))$.

\begin{proposition}\label{prop:optimality}
    Let the experiment graph contain at least one connection ($\mathcal C\neq\emptyset$), and let Assumptions A\ref{a:1}, A\ref{a:mu}, and A\ref{a:no-autocorrelation} of Section~\ref{subsec:analysis} hold (the latter includes square integrability of the outcomes). Suppose, in addition, that the conditional second moments $m_a(\mathbf s,\,\mathbf z) := \E[Y_a^2\mid \mathbf S=\mathbf s,\,\mathbf Z=\mathbf z]$ exclude would-be assignments in the sense of Assumption~A\ref{a:mu}(ii): for every $a\in\mathcal A$, $m_a(\mathbf s,\,\mathbf z) = m_a(\mathbf s,\,\mathbf z')$ whenever $z_r=z_r'$ for all $r$ with $s_r=1$. Then:
    \begin{itemize}
        \item[\textup{(a)}] $\hat\tau_c$ satisfies $\E[\hat\tau_c]=\E[\hat\tau_{EARL}]$ for every outcome model satisfying the above if and only if $q\,c(1)=1$; in particular, under the conditions of Theorem~\ref{th:unbiasedness}(b) below, $\hat\tau_c$ is unbiased for GATE if and only if $q\,c(1)=1$;
        \item[\textup{(b)}] for every $c$ with $q\,c(1)=1$,
        \begin{equation*}
        \var(\hat\tau_c) = \var(\hat\tau_{EARL}) + c(0)^2\,\var(\hat\tau_U),
        \end{equation*}
        where $\hat\tau_U = N^{-1}\sum_{a}\sum_{r\in\mathcal D(a)}(1-S_r)\frac{Z_r-p}{p(1-p)}Y_a$. Consequently, $\var(\hat\tau_c) \ge \var(\hat\tau_{EARL})$, with equality if and only if $c(0)=0$ or $\var(\hat\tau_U)=0$.
    \end{itemize}
\end{proposition}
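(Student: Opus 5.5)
The plan is to reduce everything to a single algebraic identity. Because $S_r\in\{0,1\}$, we can write $c(S_r)=c(1)S_r+c(0)(1-S_r)$, which gives
\begin{equation*}
\hat\tau_c = q\,c(1)\,\hat\tau_{EARL} + c(0)\,\hat\tau_U .
\end{equation*}
Part (a) then follows from two facts: $\E[\hat\tau_U]=0$ for every admissible outcome model, and $\E[\hat\tau_{EARL}]\neq 0$ for at least one such model. Part (b) follows from $\cov(\hat\tau_{EARL},\hat\tau_U)=0$. If $q\,c(1)=1$, then $\hat\tau_c=\hat\tau_{EARL}+c(0)\hat\tau_U$, so $\var(\hat\tau_c)=\var(\hat\tau_{EARL})+c(0)^2\var(\hat\tau_U)$. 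The inequality and its equality case are then immediate.

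For $\E[\hat\tau_U]=0$, I would treat each term $(1-S_r)(Z_r-p)Y_a$ by the tower property, conditioning on $(\mathbf S,\mathbf Z)$ so that $Y_a$ becomes $e_a(\mathbf S,\mathbf Z)$. On the event $S_r=0$, Assumption A\ref{a:mu}(ii) says $e_a$ does not depend on $z_r$. Hence $e_a$ is a function of $(\mathbf S,\mathbf Z_{-r})$, which is independent of $Z_r$ by the i.i.d. design and $\mathbf S\ind\mathbf Z$. Since $\E[Z_r-p]=0$, the term has zero expectation.

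For the converse in (a), I would show that $\E[\hat\tau_{EARL}]$ is not identically zero. I would exhibit the additive model of the worked Example (with $\alpha=0$, say), after checking that it satisfies A\ref{a:1}, A\ref{a:mu}, A\ref{a:no-autocorrelation} and the second-moment condition; adding independent noise may be needed for the latter. For that model $\E[\hat\tau_{EARL}]=N^{-1}\sum_a|\mathcal D(a)|>0$ because $\mathcal C\neq\emptyset$. Then $\E[\hat\tau_c]=q\,c(1)\E[\hat\tau_{EARL}]$ equals $\E[\hat\tau_{EARL}]$ for all models if and only if $q\,c(1)=1$. The "in particular" clause combines this with Theorem~\ref{th:unbiasedness}(b), which says $\E[\hat\tau_{EARL}]=GATE$.

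The main work, and the step I expect to be the obstacle, is the covariance. Since $\E[\hat\tau_U]=0$, we have $\cov(\hat\tau_{EARL},\hat\tau_U)=\E[\hat\tau_{EARL}\hat\tau_U]$. This is a sum of terms
\begin{equation*}
S_r(Z_r-p)(1-S_{r'})(Z_{r'}-p)\,Y_aY_b .
\end{equation*}
Terms with $r=r'$ vanish identically because $S_r(1-S_r)=0$. For $r\neq r'$, I would condition on $(\mathbf S,\mathbf Z)$ and handle the two cases separately.
\begin{itemize}
\item[(i)] If $a=b$, then $\E[Y_a^2\mid\mathbf S,\mathbf Z]=m_a(\mathbf S,\mathbf Z)$, which by the extra hypothesis does not depend on $z_{r'}$ when $S_{r'}=0$.
\item[(ii)] If $a\neq b$, I would use Assumption A\ref{a:no-autocorrelation}, read as conditional uncorrelatedness or independence of outcomes given $(\mathbf S,\mathbf Z)$. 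This gives $\E[Y_aY_b\mid\mathbf S,\mathbf Z]=e_a e_b$, and neither factor depends on $z_{r'}$ when $S_{r'}=0$.
\end{itemize}
In both cases the whole integrand apart from $(Z_{r'}-p)$ is a function of $(\mathbf S,\mathbf Z_{-r'})$, because $Z_r$ with $r\neq r'$ lies in $\mathbf Z_{-r'}$. Independence of $Z_{r'}$ from $(\mathbf S,\mathbf Z_{-r'})$ then kills the term.

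Care is needed to confirm that A\ref{a:no-autocorrelation} gives exactly the factorisation used in case (ii). If it gives something weaker, such as a bounded dependency structure, I would instead verify directly that the conditional cross-moment excludes would-be assignments. Integrability of every term is guaranteed by square integrability.
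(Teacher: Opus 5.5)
Your proposal is correct and follows the paper's proof essentially step by step. Both use the decomposition $\hat\tau_c = q\,c(1)\,\hat\tau_{EARL} + c(0)\,\hat\tau_U$ and remove every summand of $\E[\hat\tau_U]$ and $\E[\hat\tau_U\hat\tau_{EARL}]$ the same way: Assumption~A\ref{a:mu}(ii) (or the second-moment condition when $a=b$), then independence of $Z_{r'}$ from $(\mathbf S,\,\mathbf Z_{-r'})$. Your witness model for the converse, $Y_a=\sum_{r\in\mathcal D(a)}S_rZ_r$, is a cosmetic variant of the paper's $Y_a=F_a$.

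Both of your hedges can be dropped. Assumption~A\ref{a:no-autocorrelation} together with $\E[\varepsilon_a\mid\mathbf S,\,\mathbf Z]=0$ gives exactly $\E[Y_aY_b\mid\mathbf S,\,\mathbf Z]=e_ae_b$ for $a\neq b$. The deterministic witness model already satisfies the second-moment condition, since $m_a=e_a^2$, so no noise needs to be added.
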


Like Assumption~A\ref{a:mu}(ii) itself, the second-moment condition holds under the standard randomisation construction, in which the assignment indicators are exogenous to all outcome disturbances and an unassigned unit's would-be assignment is never materialised. The proof (Appendix~\ref{app:optimality}) rests on an exact orthogonal decomposition: any admissible $\hat\tau_c$ equals EARL plus a mean-zero term that is uncorrelated with it, so deviating from $c(0)=0$ can only add noise. The guarantee is confined to the class \eqref{eq:class}.

\subsection{Unbiasedness and Consistency}\label{subsec:analysis}

We begin by stating and discussing the assumptions we use to prove the unbiasedness of the EARL estimator \eqref{eq:earl}. The first assumption formalises the experimental design described in Section~\ref{subsec:notation}.

\begin{assumption}\label{a:1}
\begin{itemize}
\item The allocation of randomisation units to the experiment is carried out independently at random with probability $q\in(0,\,1)$:
\[
    S_1,\ldots,S_M \quad \iidsim \quad \mathrm{Bernoulli}(q).
\]
\item
The assignment of randomisation units to the treatment group is done independently at random with probability $p\in(0,\,1)$:
\[
    Z_1,\ldots,Z_M \quad \iidsim \quad \mathrm{Bernoulli}(p).
\]
\item
Allocation to the experiment and assignment to treatment are conducted independently, i.e., the indicator vectors
$\mathbf S=(S_1,\ldots,S_M)$ and $\mathbf Z=(Z_1,\allowbreak\ldots,\allowbreak Z_M)$ are independent:
\[
    \mathbf S \ind \mathbf Z.
\]
\end{itemize}
\end{assumption}

Somewhat differently from \cite{harshaw2023erl}, we will not maintain a linear response assumption in the strict sense. In our theory, outcomes $Y_a$, $a\in\mathcal A$, are not required to be linear functions of the exposure $\tilde H_a$ and/or the allocation coverage $G_a$; in fact, they do not even have to be measurable with respect to them. Instead, we impose a set of (jointly) weaker conditions on the \emph{conditional expectations} of the outcomes. The first of them states that the expected outcome of an analysis unit depends only on the allocation and assignment of the randomisation units it is directly connected to, and that would-be assignments of unassigned units are immaterial.

\begin{assumption}\label{a:mu}
For every analysis unit $a\in\mathcal A$, the conditional expected outcome $e_a$ defined in \eqref{eq:ea} satisfies:
\begin{itemize}
    \item[\textup{(i)}] \textup{(locality)} $e_a(\mathbf s,\,\mathbf z) = e_a(\mathbf s',\,\mathbf z')$ whenever $s_r=s_r'$ and $z_r=z_r'$ for all $r\in\mathcal D(a)$;
    \item[\textup{(ii)}] \textup{(exclusion of would-be assignments)} $e_a(\mathbf s,\,\mathbf z) = e_a(\mathbf s,\,\mathbf z')$ whenever $z_r=z_r'$ for all $r\in\mathcal D(a)$ with $s_r=1$.
\end{itemize}
\end{assumption}

Part (i) rules out ``higher-order'' effects on the expected outcome that travel through the graph beyond direct connections. Figure~\ref{fig:m-structure} shows the canonical ``M-structure'' in which such an effect could arise: randomisation unit $r_3$ is not connected to analysis unit $a_1$, yet it could influence $Y_{a_1}$ indirectly~--- treatment of $r_3$ changes the behaviour of $a_3$, which alters the state of the shared unit $r_2$ (e.g., its price or inventory), which in turn feeds back into $a_1$'s outcome. Assumption~A\ref{a:mu}(i) asserts that such feedback loops are absent, or negligible in practice, \emph{on average}: it constrains conditional means only, not the outcomes themselves. This is a bipartite analogue of the neighbourhood-interference assumptions standard in the literature \cite{aronow2017estimating,harshaw2023erl,shi2025scalableanalysisbipartiteexperiments}, and it is weaker than requiring $Y_a$ to be a function of the treatments in $\mathcal D(a)$. Part (ii) merely says that the treatment bucket a unit \emph{would have} received, had it participated, cannot influence outcomes when the unit does not participate. Since would-be assignments are never materialised (in a hash-based system they are never even computed for non-participants), (ii) holds under the standard randomisation construction in which the assignment indicators are exogenous to all outcome disturbances; we state it explicitly because the would-be indicators $Z_r$ are formally defined for all units, and Assumption~A\ref{a:1} alone does not rule out a dependence of the outcome distribution on them.

\begin{figure}
\centering
\begin{tikzpicture}[xscale=1.5, yscale=1.1,
  aunit/.style={circle, draw, fill=black!10, inner sep=2pt, minimum size=14pt},
  runit/.style={rectangle, draw, fill=black!25, inner sep=3pt, minimum size=12pt}]
  \node[aunit] (a1) at (0,1) {$a_1$};
  \node[aunit] (a2) at (1,1) {$a_2$};
  \node[aunit] (a3) at (2,1) {$a_3$};
  \node[runit] (r1) at (0,0) {$r_1$};
  \node[runit] (r2) at (1,0) {$r_2$};
  \node[runit] (r3) at (2,0) {$r_3$};
  \draw (a1) -- (r1);
  \draw (a1) -- (r2);
  \draw (a3) -- (r2);
  \draw (a3) -- (r3);
  \draw (a2) -- (r1);
  \draw[dashed,->,bend right=35] (r3) to (a1);
\end{tikzpicture}
\caption{An ``M-structure'' in the experiment graph: circles are analysis units, squares are randomisation units, solid lines are connections. A higher-order effect of $r_3$ on $Y_{a_1}$ (dashed) via $a_3$ and the shared unit $r_2$ is ruled out (on average) by Assumption~A\ref{a:mu}(i).}
\label{fig:m-structure}
\end{figure}

Under Assumption~A\ref{a:mu} the function $\mu_a$ defined in \eqref{eq:ma} admits the decomposition
\begin{eqnarray}\label{eq:mu-decomp}
    \mu_a(x,\,y) = \frac 1 {2^{2(M-|\mathcal D(a)|)}}\sum_{\textbf{s},\,\textbf{z}\in\{0,\,1\}^M} \Bigg( \prod_{r\in \mathcal{D}(a)} x^{s_r}(1-x)^{1-s_r} \nn\\{}\times \prod_{r\in \mathcal{D}(a)}
    y^{z_r}(1-y)^{1-z_r}e_a(\textbf{s},\,\textbf{z})\Bigg)
\end{eqnarray}
for all $x,\,y\in[0,\,1]$ (where, by convention, a product over an empty set equals 1): the coordinates outside $\mathcal D(a)$ integrate out.

The last assumption before we present our unbiasedness result specifies a linear structure for the conditional expected outcomes as functions of the allocation and assignment indicators.

\begin{assumption}\label{a:4}
For each analysis unit $a\in\mathcal{A}$, the conditional expected outcome $e_a$ defined in \eqref{eq:ea} is additive in the contributions of individual randomisation units: there exist coefficients $\beta^a_{r,\,j}\in\R$, $r=1,\,\ldots,\,M$, $j=0,\,1,\,2,\,3$, such that
\begin{equation*}
    e_a(\textbf{s},\,\textbf{z}) = \sum_{r=1}^M\left(\beta^a_{r,\,0}+\beta^a_{r,\,1}(1-s_r)+\beta^a_{r,\,2}s_rz_r+\beta^a_{r,\,3}s_r(1-z_r)\right)
\end{equation*}
for all $\textbf{s}\in\{0,\,1\}^M$ and $\textbf{z}\in\{0,\,1\}^M$.
\end{assumption}
Assumption~A\ref{a:4} says that the expected outcome of unit $a$ is a sum of contributions of randomisation units, where each unit contributes one of three values depending on the group it falls into: unassigned ($\beta^a_{r,\,0}+\beta^a_{r,\,1}$), treatment ($\beta^a_{r,\,0}+\beta^a_{r,\,2}$), or control ($\beta^a_{r,\,0}+\beta^a_{r,\,3}$). Crucially, unassigned ($s_r=0$) and control ($s_r=1$, $z_r=0$) units are allowed to have \emph{different} contributions. In other words, we do not assume that unassigned units deliver the control experience: they are permitted to be allocated to other, non-overlapping experiments (i.e., experiments randomising disjoint sets of randomisation units). Note also that the coefficients may vary arbitrarily across both $a$ and $r$, allowing rich effect heterogeneity, and that A\ref{a:4} restricts conditional means only. (Under A\ref{a:mu}, the coefficients $\beta^a_{r,\,j}$, $j\ge 1$, of units $r\notin\mathcal D(a)$ can be taken to be zero.)

We are now in a position to show that the EARL estimator is unbiased.
\begin{theorem}\label{th:unbiasedness}
    Let the random variables $S_r,\,Z_r\colon\Omega\mapsto\{0,\,1\}$, $r=1,\,\ldots,\,M$, satisfy Assumption A\ref{a:1}. Suppose the random variables $Y_a$, $a\in\mathcal A$, are integrable and Assumption~A\ref{a:mu} is true. Let the function $\mu\colon[0,\,1]\times[0,\,1]\mapsto\R$ be defined by relations \eqref{eq:ea}--\eqref{eq:mu}.

    Then
    \begin{itemize}
    \item[\textup{(a)}] the estimator $\hat\tau_{EARL}$ defined by \eqref{eq:earl} and the global average treatment effect (GATE) defined by \eqref{eq:gate} satisfy the following bound,
    \begin{eqnarray}\label{eq:bound}
        \left|\E[\hat\tau^{\phantom{1}}_{EARL}] - GATE\right| \le \sup_{x,\,y\in(0,\,1)}\Bigg(\left|\frac{\partial^3\mu}{\partial x^2\partial y}(x,\,y)\right| \nn\\ {}+ \left|\frac{\partial^3\mu}{\partial  x\partial y^2}(x,\,y)\right|\Bigg),
    \end{eqnarray}
    \item[\textup{(b)}] if, in addition, Assumption A\ref{a:4} holds true, $\hat \tau_{EARL}$ is unbiased,
    \begin{equation}\label{eq:unbiasedness}
        \E[\hat\tau^{\phantom{1}}_{EARL}] = GATE.
    \end{equation}
    \end{itemize}
\end{theorem}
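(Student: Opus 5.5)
The plan is to reduce $\E[\hat\tau_{EARL}]$ to a derivative of $\mu$, then compare that derivative with $GATE$ through integral (Taylor) representations; the remainders are controlled by the two third derivatives in \eqref{eq:bound}.

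\emph{Step 1: a derivative representation of the mean.} I would introduce, for each $a$, the multilinear extension $\nu_a$ of $e_a$. It takes unit-specific allocation and assignment rates $(x_r,\,y_r)_{r\le M}$, so that $\mu_a(x,\,y)=\nu_a(x,\ldots,x,\,y,\ldots,y)$. Being multilinear, $\nu_a$ has $y_r$-derivative
\[
x_r\bigl(\E[e_a\mid S_r=1,Z_r=1]-\E[e_a\mid S_r=1,Z_r=0]\bigr),
\]
with all other coordinates distributed according to their rates. The factor $x_r$ appears because, when $S_r=0$, $e_a$ does not depend on $z_r$ by Assumption~A\ref{a:mu}(ii). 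On the other hand, independence of the indicators gives
\[
\E[S_r(Z_r-p)e_a(\mathbf S,\mathbf Z)]=qp(1-p)\bigl(\E[e_a\mid S_r=1,Z_r=1]-\E[e_a\mid S_r=1,Z_r=0]\bigr).
\]
Conditioning $Y_a$ on $(\mathbf S,\mathbf Z)$ and using integrability, then applying the chain rule and summing over $r$, I obtain
\[
\E[\hat\tau_{EARL}]=\frac1q\,\frac{\partial\mu}{\partial y}(q,\,p).
\]
Locality A\ref{a:mu}(i) restricts the nonzero terms to $r\in\mathcal D(a)$. The terms with $r\notin\mathcal D(a)$ vanish anyway, so they are harmless in either case.

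\emph{Step 2: integral representations.} Assumption~A\ref{a:mu}(ii) makes $\mu(0,\,y)$ constant in $y$, so $\partial_y\mu(0,y)=0$. Hence the polynomial
\[
f(x,y):=\frac{\partial_y\mu(x,y)}{x}=\int_0^1 \partial_{xy}\mu(tx,\,y)\,dt
\]
is well defined on $[0,1]^2$, and $\E[\hat\tau_{EARL}]=f(q,p)$. Also $GATE=\mu(1,1)-\mu(1,0)=\int_0^1 f(1,y)\,dy$. Differentiating under the integral gives
\[
\partial_x f=\int_0^1 t\,\partial_{xxy}\mu(tx,y)\,dt, \qquad \partial_y f=\int_0^1\partial_{xyy}\mu(tx,y)\,dt.
\]
Therefore $|\partial_x f|\le \tfrac12\sup|\mu_{xxy}|$ and $|\partial_y f|\le\sup|\mu_{xyy}|$. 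Since these derivatives are polynomials, the sup over $(0,1)^2$ equals the sup over the closed square by continuity.

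\emph{Step 3: the bound.} I would split the error as
\[
\E[\hat\tau_{EARL}]-GATE=\bigl(f(q,p)-f(1,p)\bigr)+\Bigl(f(1,p)-\int_0^1 f(1,y)\,dy\Bigr).
\]
The first term is at most $(1-q)\sup|\partial_xf|\le\sup|\mu_{xxy}|$. The second equals $\int_0^1(f(1,p)-f(1,y))\,dy$, which is bounded by $\sup|\partial_y f|\int_0^1|p-y|\,dy\le\sup|\mu_{xyy}|$. Together these give \eqref{eq:bound}.

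\emph{Part (b).} Under A\ref{a:4}, each $\mu_a(x,y)=\sum_r\bigl(\beta^a_{r,0}+\beta^a_{r,1}(1-x)+\beta^a_{r,2}xy+\beta^a_{r,3}x(1-y)\bigr)$ has degree at most one in each variable. So both third derivatives vanish identically and (a) yields \eqref{eq:unbiasedness}.

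I expect the main obstacle to be Step~1. It needs care in justifying the per-unit score identity, in particular the factor $x_r$ coming from A\ref{a:mu}(ii) and the exchange of conditioning. The rest is calculus.
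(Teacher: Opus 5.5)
Your proof is correct. It shares the paper's skeleton: write $\E[\hat\tau_{EARL}]=q^{-1}\,\partial_y\mu(q,p)$, use $\partial_y\mu(0,\cdot)\equiv 0$ from Assumption~A\ref{a:mu}(ii), and compare with GATE through third derivatives. Both main steps, however, are carried out differently.

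For the mean identity, the paper uses a partial Horvitz--Thompson tilt $T_a(y)$. This reweights the assignments of participating connections from $\mathrm{Bernoulli}(p)$ to $\mathrm{Bernoulli}(y)$, so that $\E[T_a(y)]=\mu_a(q,y)$, and the paper then differentiates at $y=p$. That step needs a domination bound to interchange differentiation and expectation. Your multilinear extension with unit-specific rates computes each $\partial_{y_r}$ directly and matches it with the score identity for $\E[S_r(Z_r-p)e_a]$. Everything stays finite algebra, and the role of A\ref{a:mu}(ii) (the factor $x_r$) is more visible. The paper's route, in exchange, gives the interpretation of EARL as the derivative, along the assignment rate, of an unbiased HT estimator.

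For the bias, the paper applies the mean-value theorem to write both $\E[\hat\tau_{EARL}]$ and GATE as values of $\partial^2\mu/\partial x\partial y$ at two interior points. It applies the theorem once more along the segment joining them and bounds both displacements by $1$. Your representation $f(x,y)=\int_0^1\partial_{xy}\mu(tx,y)\,dt$, with the split through $f(1,p)$, gives the sharper bound $\frac{1-q}{2}\sup|\mu_{xxy}|+\frac{p^2+(1-p)^2}{2}\sup|\mu_{xyy}|$. For instance, this shows the $\mu_{xxy}$ term vanishes as $q\to 1$.

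Part (b) is identical in both.
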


Part (a) quantifies the bias without any linearity assumption: the estimator's expectation and the GATE are both values of the mixed derivative $\partial^2\mu/\partial x\partial y$ at (possibly different) interior points, so the bias is controlled by the third derivatives of $\mu$~--- a measure of non-linearity of the response surface. Under the additive-contributions model these third derivatives vanish identically, giving part (b).

The consistency result proven below (Theorem~\ref{th:consistency}) uses an extra assumption which states that, once we condition on the allocation and assignment of the randomisation units, no systematic co-movement remains between the outcomes of different analysis units: all cross-unit dependence is channelled through $(\mathbf S,\,\mathbf Z)$.

\begin{assumption}\label{a:no-autocorrelation} The outcome variables $Y_a$, $a\in\mathcal A$, are square integrable (which makes the conditional moments below well defined) and, conditional on $(\mathbf S,\,\mathbf Z)$, the errors
\begin{equation}\label{def:epsa}
    \varepsilon_a = Y_a - e_a(\mathbf{S},\,\mathbf{Z}),\qquad a\in\mathcal{A},
\end{equation}
are uncorrelated:
\begin{equation*}
    \E[\varepsilon_a \varepsilon_{a'}\mid \mathbf{S},\,\mathbf{Z}] = \E[\varepsilon_a\mid \mathbf{S},\,\mathbf{Z}] \cdot \E[\varepsilon_{a'}\mid \mathbf{S},\,\mathbf{Z}]\quad \forall\,a\neq a'\in\mathcal{A}.
\end{equation*}
\end{assumption}

The next theorem demonstrates that EARL is consistent as long as the experiment graph is not too dense.
\begin{theorem}\label{th:consistency}
    Let the random variables $S_r,\,Z_r\colon\Omega\mapsto\{0,\,1\}$, $r=1,\,\ldots,\,M$, satisfy Assumption A\ref{a:1}. Suppose there exists $L>0$ such that $\E[Y_a^2] \le L$ for all $a\in\mathcal A$ (uniformly across all values of $N$). Let Assumptions~A\ref{a:mu}, A\ref{a:4}, and A\ref{a:no-autocorrelation} hold true.

    Then, if $d^{\phantom{1}}_{\mathcal{R}}d^3_{\mathcal{A}}=o(N)$, the EARL estimator defined in \eqref{eq:earl} is consistent for GATE \eqref{eq:gate},
    \begin{equation*}
        \hat\tau^{\phantom{1}}_{EARL} \stackrel{p}{\to} GATE\qquad \textup{as }N\to\infty.
    \end{equation*}
\end{theorem}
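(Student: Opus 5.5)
By Theorem~\ref{th:unbiasedness}(b), $\E[\hat\tau_{EARL}]=GATE$. Chebyshev's inequality then reduces the claim to showing $\var(\hat\tau_{EARL})\to 0$. Write $w_a=\sum_{r\in\mathcal D(a)} \frac{S_r}{q}\frac{Z_r-p}{p(1-p)}$, so that $\hat\tau_{EARL}=N^{-1}\sum_a w_a Y_a$. Note that $|w_a|\le C d_{\mathcal A}$ with $C=1/(qp(1-p))$. Decompose $Y_a=e_a(\mathbf S,\mathbf Z)+\varepsilon_a$. The law of total variance gives
\[
\var(\hat\tau_{EARL})=\E\bigl[\var(\hat\tau_{EARL}\mid \mathbf S,\mathbf Z)\bigr]+\var\Bigl(\frac1N\sum_a w_a e_a(\mathbf S,\mathbf Z)\Bigr).
\]
I will bound the two terms separately.

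\textbf{Conditional (noise) term.} Because $\E[\varepsilon_a\mid\mathbf S,\mathbf Z]=0$ and Assumption A\ref{a:no-autocorrelation} holds, the conditional covariances across distinct $a$ vanish. This leaves
\[
N^{-2}\sum_a \E\bigl[w_a^2\var(Y_a\mid\mathbf S,\mathbf Z)\bigr]\le N^{-2}C^2d_{\mathcal A}^2\sum_a\E[Y_a^2]\le C^2 L d_{\mathcal A}^2/N,
\]
which tends to $0$ since $d_{\mathcal A}^2\le d_{\mathcal R}d_{\mathcal A}^3=o(N)$.

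\textbf{Mean term.} Set $X_a=w_a e_a(\mathbf S,\mathbf Z)$. By Assumption A\ref{a:mu}(i), $X_a$ is a function of $(S_r,Z_r)_{r\in\mathcal D(a)}$ alone. Hence $\cov(X_a,X_{a'})=0$ whenever $\mathcal D(a)\cap\mathcal D(a')=\emptyset$, because independence across $r$ follows from A\ref{a:1}. For each $a$, the number of $a'$ sharing a neighbour is at most $d_{\mathcal A}d_{\mathcal R}$. By Cauchy--Schwarz,
\[
\var\Bigl(N^{-1}\sum_a X_a\Bigr)\le N^{-2}\sum_a\sum_{a'\sim a}\sqrt{\E X_a^2\,\E X_{a'}^2}\le N^{-1}d_{\mathcal A}d_{\mathcal R}\max_a\E[X_a^2].
\]
The remaining task is to show $\E[X_a^2]=O(d_{\mathcal A}^2)$ uniformly. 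This is the main obstacle, because $e_a$ need not be bounded a priori and only $\E[Y_a^2]\le L$ is given. By conditional Jensen, $\E[e_a^2]\le \E[Y_a^2]\le L$, but $e_a$ and $w_a$ are dependent, so I cannot simply factor the expectation. I would handle this in one of two ways.

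\emph{Bounding the support.} $e_a$ takes at most $4^{|\mathcal D(a)|}$ values, each attained with probability at least $(\min(q,1-q)\min(p,1-p))^{d_{\mathcal A}}$. This gives a crude sup bound, but one exponential in $d_{\mathcal A}$, which is too weak.

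\emph{Using A\ref{a:4} (preferred).} I would write $e_a$ as the sum of a constant and per-$r$ terms $u_r(S_r,Z_r)$ that are centred. Then $X_a=w_a\cdot\bigl(\text{const}+\sum_r u_r\bigr)$. The constant part contributes $\text{const}^2\,\E[w_a^2]$, which is $O(d_{\mathcal A})\,\text{const}^2$, and the constant itself is bounded by $\sqrt L$ since $\text{const}=\E[e_a]$. For the random part, $\E[e_a^2]\le L$ together with independence gives $\sum_r\var(u_r)\le L$. Expanding $\E\bigl[(\sum_r v_r)^2(\sum_{r'}u_{r'})^2\bigr]$, where $v_r$ are the centred summands of $w_a$, only paired index patterns survive. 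The sum is then bounded by $O(d_{\mathcal A})\sum_r\var(u_r)$ plus terms involving $\E[v_r^2u_r^2]\le C'\var(u_r)$ (the $u_r$ take finitely many values with probabilities bounded below by constants). This yields $\E[X_a^2]=O(d_{\mathcal A}L)$.

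Either route yields $\E[X_a^2]=O(d_{\mathcal A}^2)$. The mean term is therefore $O(d_{\mathcal R}d_{\mathcal A}^3/N)=o(1)$, matching the stated rate and explaining the $d_{\mathcal A}^3$. Combining the two bounds with Chebyshev's inequality completes the argument.
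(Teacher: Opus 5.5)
Your proof is correct, but the step you call the main obstacle is not one. Since $|w_a|\le Cd_{\mathcal A}$ holds pointwise (as you note at the start), $X_a^2=w_a^2e_a^2\le C^2d_{\mathcal A}^2e_a^2$ pointwise. Hence $\E[X_a^2]\le C^2d_{\mathcal A}^2\,\E[e_a^2]\le C^2d_{\mathcal A}^2L$ by conditional Jensen. The dependence between $w_a$ and $e_a$ is irrelevant because nothing needs to be factored. This is exactly the paper's device: Lemma~\ref{lem:weight-bound} gives $\E[\hat\gamma_a^2]\le K_w^2\E[Y_a^2]$ with $K_w=Cd_{\mathcal A}$.

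Your closing claim that ``either route'' gives $O(d_{\mathcal A}^2)$ contradicts your own correct remark that the support-counting route is exponential in $d_{\mathcal A}$. Only the A\ref{a:4} route works. There you omit the cross term $2c\,\E[w_a^2U]$, with $U=\sum_r u_r$; absorb it via $(c+U)^2\le 2c^2+2U^2$. Also, $\E[v_r^2u_r^2]\le C^2\var(u_r)$ follows from $|v_r|\le C$ alone, with no need for probabilities bounded below.

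Beyond that, your route differs from the paper's mainly in how the noise is handled. The paper never splits $Y_a=e_a+\varepsilon_a$. It writes the MSE as $N^{-2}\sum_{a,a'}\E[(\hat\gamma_a-\gamma_a)(\hat\gamma_{a'}-\gamma_{a'})]$ and discards non-overlapping pairs by Lemma~\ref{lem:cov=0}(b). It then bounds each of the at most $d_{\mathcal A}d_{\mathcal R}$ remaining terms per $a$ by $4LK_w^2$, via Cauchy--Schwarz and Lemma~\ref{lem:kth-abs-moment-bound}. This is shorter and reuses the moment lemma that later drives the CLT. However, each off-diagonal bound carries the full second moment $\E[(\hat\gamma_a-\gamma_a)^2]$, noise included, even though by A\ref{a:no-autocorrelation} the noise contributes nothing off the diagonal.

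Your law-of-total-variance split confines the noise to the diagonal, at cost $O(d_{\mathcal A}^2/N)$. Your A\ref{a:4} expansion in fact gives $\E[X_a^2]=O(d_{\mathcal A}L)$, so the mean term is $O(d_{\mathcal R}d_{\mathcal A}^2/N)$. Had you not coarsened this back to $O(d_{\mathcal A}^2)$ to match the stated rate, your argument would have proved consistency under the weaker condition $d_{\mathcal R}d_{\mathcal A}^2=o(N)$.
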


The sparsity requirement $d_{\mathcal R}^{\phantom{1}}d_{\mathcal A}^3=o(N)$ coincides with the corresponding condition for the full-assignment ERL estimator \cite[Theorem~4.2]{harshaw2023erl}: partial assignment affects only the constants in the mean-squared-error bound (through the factor $q^{-1}$ in the weights), not the admissible graph-density regime.

\subsection{Inference}\label{subsec:inference}
In this section, we prove the asymptotic normality of the EARL estimator and devise estimators for its asymptotic variance. This enables statistical testing and confidence intervals based on the EARL estimator.

\subsubsection{Asymptotic Normality}

To prove asymptotic normality we strengthen the decorrelation requirement of Assumption A\ref{a:no-autocorrelation}, asking that the errors $\varepsilon_a$, $a\in\mathcal A$, defined in \eqref{def:epsa} be independent across observations and from the allocation and assignment indicators.

\begin{assumptionstar}{A\ref{a:no-autocorrelation}$^*$}
The errors $\{\varepsilon_a\}_{a\in\mathcal A}$, allocation indicators $\mathbf S$, and assignment indicators $\mathbf Z$ are jointly independent.
\end{assumptionstar}

Under A\ref{a:no-autocorrelation}$^*$, the products $\varepsilon_a\varepsilon_{a'}$ are automatically integrable ($\E[|\varepsilon_a\varepsilon_{a'}|] = \E[|\varepsilon_a|]\,\E[|\varepsilon_{a'}|]$), so the conditional uncorrelatedness required by Assumption~A\ref{a:no-autocorrelation} holds for merely integrable outcomes; since every result below that invokes A\ref{a:no-autocorrelation}$^*$ also bounds the fourth moments of the outcomes, Assumption~A\ref{a:no-autocorrelation} holds in full in all those settings.

The proof of asymptotic normality also relies on the following (and last) assumption of this subsection. It rules out practically irrelevant edge cases in which the GATE can be estimated at a faster than parametric rate. See the discussion in \cite[p.~475]{harshaw2023erl} for specific scenarios that are excluded by this condition.

\begin{assumption}\label{a:non-degeneracy} \textit{(Asymptotic Non-Degeneracy)} The normalised variance of the EARL estimator is bounded away from zero, i.e., there exists $\delta >0$ such that $N\cdot\var(\hat\tau_{EARL}) > \delta$ for all sufficiently large $N$.
\end{assumption}

\begin{theorem}\label{th:asymptotic-normality}
    Let the random variables $S_r,\,Z_r\colon\Omega\mapsto\{0,\,1\}$, $r=1,\,\ldots,\,M$, satisfy Assumption A\ref{a:1}. Suppose there exists $L>0$ such that $\E[Y_a^4] \le L$ for all $a\in\mathcal A$ (uniformly across all values of $N$) and Assumptions~A\ref{a:mu}, A\ref{a:4}, A\ref{a:no-autocorrelation}$^*$, and A\ref{a:non-degeneracy} are true.

    Then, if $d^{4}_{\mathcal{R}}d^{10}_{\mathcal{A}}=o(N)$, the EARL estimator defined in \eqref{eq:earl} is asymptotically normal,
    \begin{equation*}
        \frac{\hat\tau^{\phantom{1}}_{EARL} - GATE}{\sqrt{\var\left(\hat\tau^{\phantom{1}}_{EARL}\right)}}\stackrel{d}{\to} \mathcal{N}(0,\,1)\qquad \textup{as }N\to\infty.
    \end{equation*}
\end{theorem}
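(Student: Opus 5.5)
The plan follows the dependency-graph route of \citet{harshaw2023erl} and applies a Stein-type CLT for sums with local dependence. First, under A\ref{a:mu} and A\ref{a:4} we write $Y_a = e_a(\mathbf S,\mathbf Z)+\varepsilon_a$, with $e_a$ depending only on $(S_r,Z_r)_{r\in\mathcal D(a)}$. Set
\[
W_r=\frac{S_r}{q}\cdot\frac{Z_r-p}{p(1-p)},
\]
and decompose
\[
\hat\tau_{EARL}-GATE = \frac1N\sum_{a\in\mathcal A} X_a,\qquad X_a = \Bigl(\sum_{r\in\mathcal D(a)}W_r\Bigr)\bigl(e_a(\mathbf S,\mathbf Z)+\varepsilon_a\bigr) - \tau_a ,
\]
where $\tau_a=\mu_a(1,1)-\mu_a(1,0)$. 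By Theorem~\ref{th:unbiasedness}(b) we have $\E[X_a]=\tau_a-\tau_a=0$ after summation; more precisely, the same computation applied unit by unit gives $\E[X_a]=0$ for each $a$. Under A\ref{a:no-autocorrelation}$^*$, $X_a$ is a measurable function of $\{(S_r,Z_r)\}_{r\in\mathcal D(a)}$ and $\varepsilon_a$ alone. These blocks are mutually independent whenever $\mathcal D(a)\cap\mathcal D(a')=\emptyset$. So $\{X_a\}$ has a dependency graph in which $a\sim a'$ iff they share a randomisation unit, and its maximal degree is $D\le d_{\mathcal A}d_{\mathcal R}$.

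Second, we bound moments. Since $|W_r|\le 1/(q\min(p,1-p))$, we get $|\sum_{r\in\mathcal D(a)}W_r|\le C d_{\mathcal A}$. With $\E[Y_a^4]\le L$, this gives $\E[X_a^4]\le C d_{\mathcal A}^4$ and $\E|X_a|^3\le C d_{\mathcal A}^3$ (Hölder; the constant $\tau_a$ is bounded by the same order via Jensen, $|e_a|\le$ conditional $L^1$ norms). Then we apply the Stein's-method bound for dependency graphs (Ross 2011, Theorem~3.5). With $\sigma^2=\var(\sum_a X_a)=N^2\var(\hat\tau_{EARL})$ it reads
\[
d_W\Bigl(\tfrac{\sum X_a}{\sigma},\mathcal N(0,1)\Bigr)\le \frac{D^2}{\sigma^3}\sum_a\E|X_a|^3+\frac{\sqrt{28}\,D^{3/2}}{\sqrt\pi\,\sigma^2}\Bigl(\sum_a\E X_a^4\Bigr)^{1/2}.
\]
By A\ref{a:non-degeneracy}, $\sigma^2\ge\delta N$.

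Third, we substitute. The first term is at most $C\, d_{\mathcal R}^2 d_{\mathcal A}^2\cdot N d_{\mathcal A}^3/N^{3/2}=C(d_{\mathcal R}^4d_{\mathcal A}^{10}/N)^{1/2}$. The second is at most $C\,d_{\mathcal R}^{3/2}d_{\mathcal A}^{3/2}\sqrt N d_{\mathcal A}^2/N=C(d_{\mathcal R}^3d_{\mathcal A}^7/N)^{1/2}$. Both vanish under $d_{\mathcal R}^4d_{\mathcal A}^{10}=o(N)$, since $d_{\mathcal R},d_{\mathcal A}\ge1$. Convergence in Wasserstein distance implies convergence in distribution, and $\sum X_a/\sigma=(\hat\tau_{EARL}-GATE)/\sqrt{\var(\hat\tau_{EARL})}$, which gives the claim.

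The main obstacle is making the dependency-graph structure rigorous. $X_a$ must be shown to depend only on the local indicators and on $\varepsilon_a$: $Y_a$ itself is not assumed local, only its conditional mean is. Here A\ref{a:no-autocorrelation}$^*$ is essential, because it makes $\varepsilon_a$ independent of everything else, so $Y_a=e_a(\cdot)+\varepsilon_a$ is genuinely a function of local variables plus independent noise. A secondary care point is the uniform fourth-moment control of $e_a(\mathbf S,\mathbf Z)$. We obtain it from $e_a^4\le\E[Y_a^4\mid\mathbf S,\mathbf Z]$ by conditional Jensen, so $\E[e_a^4]\le L$ and $\E[\varepsilon_a^4]\le 16L$.
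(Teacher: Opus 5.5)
Your proposal is correct and follows essentially the same route as the paper. Both arguments centre the unit-level terms $X_a=\phi_aY_a-\gamma_a$ (mean zero by the unit-level version of Theorem~\ref{th:unbiasedness}), use A\ref{a:mu} and A\ref{a:no-autocorrelation}$^*$ to obtain dependency neighbourhoods of size at most $d_{\mathcal A}d_{\mathcal R}$, bound the third and fourth moments via the uniform weight bound, and apply Ross's Theorem~3.5 with $\sigma^2\ge\delta N$ to get the rate $d_{\mathcal R}^2d_{\mathcal A}^5/\sqrt N$; the only differences (your sharper weight constant $1/(q\min(p,1-p))$, and $\sqrt{28}$ versus the paper's $\sqrt{26}$ in Ross's bound) are cosmetic and do not affect the argument.
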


\subsubsection{Asymptotic Variance Estimation}\label{subsubsec:variance}
An estimator of the asymptotic variance of $\hat\tau^{\phantom{1}}_{EARL}$ can be constructed similarly to \cite[Section 5.1]{harshaw2023erl}. For that purpose, we will need a stronger version of Assumption~A\ref{a:4} under which the contributions of the randomisation units connected to a given analysis unit are homogeneous. Define the \emph{joint coverage--exposure} of unit $a$,
\begin{equation}\label{eq:Fa}
    F_a = \frac{1}{|\mathcal{D}(a)|}\sum_{r\in\mathcal{D}(a)} S_rZ_r = G_a\tilde H_a,
\end{equation}
the share of $a$'s connections that participate \emph{and} are treated (with $F_a=G_a=0$ if $\mathcal D(a)=\emptyset$).

\begin{assumptionstar}{A\ref{a:4}$^*$}
For every $a\in\mathcal A$ there exist $B^a_j\in\R$, $j=0,\,1,\,2$, such that
\begin{equation*}
    e_a(\mathbf s,\,\mathbf z) = B^a_0 + B^a_1 \, G_a + B^a_2 \, F_a ,
\end{equation*}
where $G_a$ and $F_a$ are evaluated at $(\mathbf s,\,\mathbf z)$.
\end{assumptionstar}

Assumption~A\ref{a:4}$^*$ is exactly the special case of Assumption~A\ref{a:4} in which the coefficients $\beta^a_{r,\,j}$ do not vary with $r$ within $\mathcal D(a)$ (and vanish outside), so it implies Assumptions~A\ref{a:mu} and~A\ref{a:4}. For an isolated unit the coefficients $B_1^a,\,B_2^a$ multiply identically zero quantities and are thus unidentifiable; we fix them at zero. With this convention, $\mu_a(x,\,y) = B_0^a + B_1^ax + B_2^axy$ and the unit-level effect is $\gamma_a := \mu_a(1,\,1)-\mu_a(1,\,0) = B_2^a$ for every $a\in\mathcal A$.

Let $\phi_a$ denote the EARL weight of unit $a$,
\begin{equation}\label{def:phi}
    \phi_a(\mathbf S,\,\mathbf Z) = \sum_{r\in\mathcal D(a)}\frac{S_r}{q}\cdot\frac{Z_r - p}{p(1-p)},
\end{equation}
so that $\hat\tau^{\phantom{1}}_{EARL} = N^{-1}\sum_{a\in\mathcal A}\phi_a(\mathbf S,\,\mathbf Z)Y_a$, and write $\hat\gamma_a = \phi_aY_a$ for the unit-level estimate. The variance of EARL decomposes into pairwise covariances, $\var(\hat\tau_{EARL}) = N^{-2}\sum_{a_1,a_2}\cov(\hat\gamma_{a_1},\hat\gamma_{a_2})$, and we estimate each covariance by a weighted product of the observed outcomes, $Y_{a_1}Y_{a_2}R_{a_1,a_2}$. Denote by
$$
\mathcal P = \{(a_1,\,a_2)\in\mathcal A^2\colon \mathcal D(a_1)\cap\mathcal D(a_2)\neq\emptyset\}
$$
the set of \emph{overlapping} pairs (including the diagonal). The weights $R_{a_1,\,a_2}$ are constructed as follows.
\begin{itemize}
    \item If $(a_1,\,a_2)\notin\mathcal P$, set $R_{a_1,\,a_2}=0$ (under Assumptions A\ref{a:1}, A\ref{a:mu}, and A\ref{a:no-autocorrelation}, such pairs have zero covariance).
    \item If $(a_1,\,a_2)\in\mathcal P$ and $a_1\neq a_2$, let $\mathbf m_{a_1,a_2}$ be the vector of the $8$ monomials
    \begin{equation*}
        (G_{a_1},\,F_{a_1},\,G_{a_2},\,F_{a_2},\,G_{a_1}G_{a_2},\,G_{a_1}F_{a_2},\,F_{a_1}G_{a_2},\,F_{a_1}F_{a_2}),
    \end{equation*}
    let $\Sigma_{a_1,a_2}$ be its $8\times 8$ covariance matrix under the design, and let $\mathbf c_{a_1,a_2} = \Sigma^{-1}_{a_1,a_2}\mathbf u$, where $\mathbf u$ is the coordinate vector selecting $F_{a_1}F_{a_2}$. Set
    \begin{equation}\label{def:R}
        R_{a_1,\,a_2} = \phi_{a_1}\phi_{a_2} - \mathbf c_{a_1,a_2}^{\top}\left(\mathbf m_{a_1,a_2} - \E[\mathbf m_{a_1,a_2}]\right).
    \end{equation}
    \item If $a_1=a_2=a$, apply the same construction with the $5$-dimen\-sional monomial vector $\mathbf m_{a,a} = (G_a,\,F_a,\,G_a^2,\,G_aF_a,\,F_a^2)$ and $\mathbf u$ selecting $F_a^2$:
    $R_{a,\,a} = \phi_{a}^2 - \mathbf c_{a,a}^{\top}(\mathbf m_{a,a} - \E[\mathbf m_{a,a}])$.
\end{itemize}
The rationale mirrors \cite[Section~5.1]{harshaw2023erl}: the first term of $R_{a_1,a_2}$ makes $Y_{a_1}Y_{a_2}\phi_{a_1}\phi_{a_2}$ an (automatically unbiased) estimator of $\E[\hat\gamma_{a_1}\hat\gamma_{a_2}]$, while the linear-in-monomials term is calibrated so that $Y_{a_1}Y_{a_2}\,\mathbf c^{\top}(\mathbf m-\E[\mathbf m])$ is an unbiased estimator of the product $\gamma_{a_1}\gamma_{a_2}=\E[\hat\gamma_{a_1}]\E[\hat\gamma_{a_2}]$ for \emph{all} values of the coefficients $B^a_j$; subtracting the two yields an unbiased estimator of the covariance. All the moments entering $\E[\mathbf m_{a_1,a_2}]$ and $\Sigma_{a_1,a_2}$ are polynomials in $p$, $q$ determined by $|\mathcal D(a_1)|$, $|\mathcal D(a_2)|$, and $|\mathcal D(a_1)\cap\mathcal D(a_2)|$; they can be computed in closed form before the experiment or estimated by Monte Carlo \cite{fattorini2006applying}. Theorem~\ref{th:asymptotic-variance} below assumes the exact design moments; substituting Monte Carlo estimates forfeits exact unbiasedness, and asymptotic validity then requires the number of auxiliary draws (generated independently of the experiment) to grow fast enough that the resulting error in $\Sigma^{-1}_{a_1,a_2}$ remains asymptotically negligible. Note that only pairs in $\mathcal P$ require any computation, so the estimator scales with the number of overlapping pairs (at most $Nd_{\mathcal A}d_{\mathcal R}$) rather than $N^2$.

The construction requires the matrices $\Sigma_{a_1,a_2}$ to be invertible, which we assume in a uniform manner.

\begin{assumption}\label{a:lambda0}
    There exists a sequence $\lambda_N>0$ such that for every pair $(a_1,\,a_2)\in\mathcal P$ the smallest eigenvalue of $\Sigma_{a_1,a_2}$ is at least $\lambda_N$, and $d^{3}_{\mathcal A}d^{3}_{\mathcal R} = o(N\lambda_N^2)$.
\end{assumption}

Assumption~A\ref{a:lambda0} is the analogue of the non-degenerate-exposures condition of \cite[Assumption~4]{harshaw2023erl} and, like it, can be verified by the experimenter before the experiment, since $\Sigma_{a_1,a_2}$ depends only on the design. Allowing $\lambda_N\to0$ is essential: the smallest eigenvalue of a diagonal pair's $\Sigma_{a,a}$ cannot exceed its $(G_a,\,G_a)$ entry $\var(G_a) = q(1-q)/|\mathcal D(a)|$, so a lower bound independent of $N$ would force bounded analysis-side degrees. The eigenvalue condition fails entirely, for example, for analysis units with a single connection or for pairs with identical connection sets; for such pairs the corresponding covariance terms can be replaced by conservative upper bounds in the spirit of \cite[Proposition~5.2]{harshaw2023erl}, preserving asymptotic validity provided the number of affected \emph{ordered pairs}, multiplied by the magnitude of the replaced terms, is $o(N)$.

Given the weights, the variance estimator is
\begin{equation}\label{def:hat-V}
    \hat V = \frac 1 {N^2} \sum_{a_1\in\mathcal A}\sum_{a_2\in\mathcal A} Y_{a_1}Y_{a_2}R_{a_1,\,a_2}.
\end{equation}

The following theorem demonstrates that $\hat V$ is exactly unbiased and can be used to perform inference based on $\hat\tau_{EARL}^{\phantom{1}}$.

\begin{theorem}\label{th:asymptotic-variance}
    Let the assumptions of Theorem~\ref{th:asymptotic-normality} hold with Assumption~A\ref{a:4}$^*$ in place of Assumptions~A\ref{a:mu} and A\ref{a:4} (in particular, $d^{4}_{\mathcal{R}}d^{10}_{\mathcal{A}}=o(N)$ and $\E[Y_a^4]\le L$), and let Assumption~A\ref{a:lambda0} hold. Let the variance estimator $\hat V$ be defined by \eqref{def:R}--\eqref{def:hat-V}.

    Then $\E[\hat V] = \var(\hat\tau^{\phantom{1}}_{EARL})$ for every $N$, $\hat V / \var(\hat\tau^{\phantom{1}}_{EARL})\stackrel{p}{\to} 1$ as $N\to\infty$, and
    $$
        \frac{\hat\tau^{\phantom{1}}_{EARL} - GATE}{\sqrt{\hat V}} \stackrel{d}{\to} \mathcal N(0,\,1).
    $$
\end{theorem}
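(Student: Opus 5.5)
The proof has three parts: exact unbiasedness of $\hat V$ for every $N$, the ratio consistency $\hat V/\var(\hat\tau_{EARL})\stackrel{p}{\to}1$, and the Studentised limit, which follows from Theorem~\ref{th:asymptotic-normality} and Slutsky's lemma.

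\textbf{Exact unbiasedness.} Since $\var(\hat\tau_{EARL}) = N^{-2}\sum_{a_1,a_2}\cov(\hat\gamma_{a_1},\hat\gamma_{a_2})$, it suffices to show $\E[Y_{a_1}Y_{a_2}R_{a_1,a_2}] = \cov(\hat\gamma_{a_1},\hat\gamma_{a_2})$ for every pair.

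For non-overlapping pairs, Assumption~A\ref{a:4}$^*$ says $e_{a_1}$ and $e_{a_2}$ depend on disjoint blocks of the independent vector $(\mathbf S,\mathbf Z)$, and so do $\phi_{a_1}$ and $\phi_{a_2}$. Together with A\ref{a:no-autocorrelation}$^*$, $\hat\gamma_{a_1}$ and $\hat\gamma_{a_2}$ are then independent and the covariance is $0$, matching $R_{a_1,a_2}=0$.

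For overlapping pairs $a_1\neq a_2$, condition on $(\mathbf S,\mathbf Z)$ and use independence of the errors:
\[
\E[Y_{a_1}Y_{a_2}\mid\mathbf S,\mathbf Z]=(e_{a_1}+\bar\varepsilon_{a_1})(e_{a_2}+\bar\varepsilon_{a_2}),
\]
where $\bar\varepsilon_a=\E\varepsilon_a$ is a constant that can be absorbed into $B_0^a$. Expanding with $e_a = B_0^a+B_1^aG_a+B_2^aF_a$ gives a constant plus a linear combination of the eight monomials in $\mathbf m_{a_1,a_2}$. The coefficient on $F_{a_1}F_{a_2}$ is $B_2^{a_1}B_2^{a_2}=\gamma_{a_1}\gamma_{a_2}$.

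The first term of $R_{a_1,a_2}$ gives $\E[Y_{a_1}Y_{a_2}\phi_{a_1}\phi_{a_2}] = \E[\hat\gamma_{a_1}\hat\gamma_{a_2}]$ directly. For the second term, write $Y_{a_1}Y_{a_2}$ in conditional mean as $\mathbf b^\top\mathbf m + b_0$. Then
\[
\E[\mathbf b^\top\mathbf m\,\mathbf c^\top(\mathbf m-\E\mathbf m)] = \mathbf b^\top\Sigma\,\Sigma^{-1}\mathbf u = b_{F_{a_1}F_{a_2}} = \gamma_{a_1}\gamma_{a_2}.
\]
By Theorem~\ref{th:unbiasedness}(b) applied unit-wise, $\gamma_{a_1}\gamma_{a_2}=\E\hat\gamma_{a_1}\E\hat\gamma_{a_2}$, so the difference of the two terms has expectation $\cov(\hat\gamma_{a_1},\hat\gamma_{a_2})$. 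The diagonal case uses the five monomials of $\mathbf m_{a,a}$ in the same way. The only difference is that $\E[Y_a^2\mid\cdot]$ contains $\var(\varepsilon_a)$, which lands in the constant and is killed by centring. This matters because otherwise the first term would pick up the error variance while the correction would not.

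\textbf{Consistency of $\hat V$.} I will show $N^2\var(\hat V)\cdot N^{-2}\to0$ relative to $\var(\hat\tau)^2$. By A\ref{a:non-degeneracy}, $\var(\hat\tau)\ge\delta/N$, so it suffices to prove $\var(N\hat V)=o(1)$, i.e.\ $\var\bigl(N^{-1}\sum_{\mathcal P}Y_{a_1}Y_{a_2}R_{a_1,a_2}\bigr)=o(1)$. The steps are:
\begin{itemize}
\item[(i)] Bound $|R_{a_1,a_2}|$. We have $|\phi_a|\lesssim d_{\mathcal A}$. The correction has $\|\mathbf c\|\le\lambda_N^{-1}$ and $\|\mathbf m-\E\mathbf m\|=O(1)$, so $|R|\lesssim d_{\mathcal A}^2+\lambda_N^{-1}$.
\item[(ii)] Expand the variance as a sum over pairs of pairs $\bigl((a_1,a_2),(a_3,a_4)\bigr)$ in $\mathcal P$. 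A term's covariance vanishes unless the two pairs are linked through the graph: they must share a randomisation unit within their neighbourhoods, or share an analysis unit.
\item[(iii)] Count the linked configurations: there are at most $N\,(d_{\mathcal A}d_{\mathcal R})^3$ of them.
\item[(iv)] Bound each covariance using fourth moments $\le L$, via Cauchy--Schwarz and conditional independence.
\end{itemize}
This gives
\[
\var(N\hat V)\lesssim N^{-1}d_{\mathcal A}^3d_{\mathcal R}^3\,(d_{\mathcal A}^2+\lambda_N^{-1})^2,
\]
which is $o(1)$ by A\ref{a:lambda0} combined with $d_{\mathcal R}^4d_{\mathcal A}^{10}=o(N)$. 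Chebyshev's inequality then yields ratio consistency, and Slutsky's lemma gives the final claim.

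\textbf{Main obstacle.} The hardest part is step (ii)--(iii): the dependency-graph counting and the precise magnitude of $R$. The correction term must be controlled by $\lambda_N^{-1}$ rather than cruder bounds on $\|\mathbf c\|$, or the rates will not match the stated conditions. I would first try the crude spectral bound $\|\mathbf c\|\le\lambda_N^{-1}$ together with $\|\mathbf m-\E\mathbf m\|_\infty\le 2$.
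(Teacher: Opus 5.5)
Your proposal is correct and follows essentially the same route as the paper. Exact unbiasedness comes from the moment-matching identity $\E[e_{a_1}e_{a_2}\,\mathbf c^{\top}(\mathbf m-\E[\mathbf m])]=\mathbf b^{\top}\Sigma\Sigma^{-1}\mathbf u=B_2^{a_1}B_2^{a_2}$, with the error variance landing in the constant and being removed by centring on the diagonal. Ratio consistency then follows from a Chebyshev bound using $|R_{a_1,a_2}|\le K_w^2+\sqrt8/\lambda_N$ and at most order $N d_{\mathcal A}^3d_{\mathcal R}^3$ dependent pairs-of-pairs, with the two rate terms controlled by $d_{\mathcal A}^{7}d_{\mathcal R}^{3}=o(N)$ and the $\lambda_N$ condition, and the final claim follows by Slutsky. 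The only differences are cosmetic: the paper packages the variance count in a generic lemma for sparse locally dependent sums, and your constant $\bar\varepsilon_a$ is actually zero because $e_a$ is the conditional mean.
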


Like its full-assignment counterpart, $\hat V$ is not guaranteed to be non-negative in finite samples. Theorem~\ref{th:asymptotic-variance} implies $\Pr\{\hat V>0\}\to1$, and on the complementary event the studentised statistic can be defined arbitrarily (e.g., with $|\hat V|$ in place of $\hat V$), cf.\ \cite{harshaw2023erl}.

\subsubsection{Randomisation Inference}\label{subsubsec:ri}

An additional, computationally simpler variance estimator is available under a stronger null hypothesis. Specifically, consider
\begin{equation}\label{def:strong-null}
    H_0^*\colon\qquad e_a(\mathbf s,\,\mathbf z)\ \textup{does not depend on }(\mathbf s,\,\mathbf z),\quad \forall\,a\in\mathcal A.
\end{equation}
The null $H_0^*$ states that neither the assignment \emph{nor the allocation} of any randomisation unit affects the expected outcome of any analysis unit. It implies the pointwise null of zero expected effects, $\mu_a(1,\,1)-\mu_a(1,\,0)=0$ for all $a$, but is stronger. The strengthening is not incidental~--- it is necessitated by partial assignment. Under a concurrent experiment, the outcomes may co-vary with the allocation indicators $\mathbf S$ even when our treatment has no effect (unassigned units deliver a different experience); a re-randomisation procedure regenerates $(\mathbf S,\,\mathbf Z)$ independently of the observed outcomes and therefore destroys precisely this co-variation, so it can only reproduce the sampling variance when no such co-variation exists. Theorem~\ref{th:randomisation-inference} below uses $H_0^*$ together with Assumption~A\ref{a:no-autocorrelation}$^*$: constant conditional means and errors independent of the design together render the outcomes independent of $(\mathbf S,\,\mathbf Z)$. We demonstrate this effect numerically in Section~\ref{sec:simulations}.

The randomisation-inference methodology follows \cite[Section~4]{shi2025scalableanalysisbipartiteexperiments}: we re-randomise the allocation and assignment indicators, forming $J$ independent samples,
\begin{equation}\label{def:S-Z-samples}
    \begin{array}{c}
        \tilde{\mathbf{S}}_j = (\tilde S_{1,\,j},\ldots,\tilde S_{M,\,j}) \quad \iidsim \quad \mathrm{Bernoulli}(q),\\
        \tilde{\mathbf Z}_j = (\tilde Z_{1,\,j},\ldots,\tilde Z_{M,\,j}) \quad \iidsim \quad \mathrm{Bernoulli}(p),\\
        \tilde{\mathbf{S}}_j\ind \tilde{\mathbf Z}_j,
    \end{array}\qquad j=1,\,\ldots,\,J,
\end{equation}
drawn independently of the experimental data. Let $\hat\tau^{\phantom{1}}_{EARL}(\tilde {\mathbf S}_j,\,\tilde{\mathbf Z}_j)$ be the EARL estimator computed using the $j$-th generated sample of allocation and assignment indicators with the weight function \eqref{def:phi}, i.e.,
$$
    \hat\tau^{\phantom{1}}_{EARL}(\tilde {\mathbf S}_j,\,\tilde{\mathbf Z}_j) = \frac 1 N \sum_{a\in\mathcal{A}}\phi_a(\tilde {\mathbf S}_j,\,\tilde{\mathbf Z}_j)Y_a,\qquad j=1,\,\ldots,\,J.
$$
Note that these ``copies'' of the estimator are still computed using the same observed outcomes $Y_a$; only the allocation and assignment indicators are replaced with the re-randomised values.

Using the above notation, the randomisation-inference estimator of the variance of $\hat\tau^{\phantom{1}}_{EARL}$ is the empirical variance of the copies,
\begin{equation}\label{def:hatV}
    \hat V_{RI} = \frac{1}{J} \sum_{j=1}^J\left(\hat\tau^{\phantom{1}}_{EARL}(\tilde {\mathbf S}_j,\,\tilde{\mathbf Z}_j)-\frac 1 J\sum_{j'=1}^J\hat\tau^{\phantom{1}}_{EARL}(\tilde {\mathbf S}_{j'},\,\tilde{\mathbf Z}_{j'})\right)^2.
\end{equation}
We complete our theoretical development with a theorem demonstrating that $\hat V_{RI}$ can be used for significance testing based on $\hat\tau^{\phantom{1}}_{EARL}$. The validity proof for the analogous full-assignment procedure in \cite{shi2025scalableanalysisbipartiteexperiments} was found to have a flaw; the proof below (in Appendix~\ref{app:ri}) is self-contained.

\begin{theorem}\label{th:randomisation-inference}
    Let Assumptions A\ref{a:1}, A\ref{a:mu}, A\ref{a:no-autocorrelation}$^*$, and A\ref{a:non-degeneracy} hold, and suppose there exists $L>0$ such that $\E[Y_a^4] \le L$ for all $a\in\mathcal A$ (uniformly across all values of $N$). Let the strong null \eqref{def:strong-null} hold true, let $d^{4}_{\mathcal{R}}d^{10}_{\mathcal{A}}=o(N)$, and let the number of re-randomisations $J=J_N$ grow so that $d^{6}_{\mathcal{A}}d^{2}_{\mathcal{R}}=o(J_N)$. Let $\hat V_{RI}$ be defined by \eqref{def:S-Z-samples}--\eqref{def:hatV}.

    Then $\hat V_{RI} / \var(\hat\tau^{\phantom{1}}_{EARL})\stackrel{p}{\to} 1$ as $N\to\infty$, and
    for any $\alpha\in(0,\,1)$,
    \begin{equation*}
        \Pr\left\{\left|\hat\tau^{\phantom{1}}_{EARL}\right| > z_{1-\alpha/2}\sqrt{\hat V_{RI}}\right\}\to \alpha \qquad \textup{as }N\to\infty,
    \end{equation*}
    where $z_{1-\alpha/2}$ is the quantile of the standard normal distribution of level $1-\alpha/2$.
\end{theorem}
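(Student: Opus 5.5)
Under the strong null $H_0^*$ together with Assumption~A\ref{a:no-autocorrelation}$^*$, each outcome equals $Y_a=c_a+\varepsilon_a$ with a constant $c_a$, and the vector $\mathbf Y=(Y_a)_{a\in\mathcal A}$ is independent of $(\mathbf S,\,\mathbf Z)$. The proof rests on this observation. In particular $GATE=0$. Conditionally on $\mathbf Y$, the observed pair $(\mathbf S,\,\mathbf Z)$ and the re-randomised pairs $(\tilde{\mathbf S}_j,\,\tilde{\mathbf Z}_j)$ are $J+1$ i.i.d.\ draws from the design. Hence all of them are exchangeable, and the statistics $\hat\tau_{EARL}(\mathbf S,\mathbf Z)$ and $\hat\tau_{EARL}(\tilde{\mathbf S}_j,\tilde{\mathbf Z}_j)$ are i.i.d.\ given $\mathbf Y$. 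Write $T(\mathbf s,\mathbf z)=N^{-1}\sum_a\phi_a(\mathbf s,\mathbf z)Y_a$. Since $\E[\phi_a]=0$, we have $\E[T\mid\mathbf Y]=0$ and
\[
V(\mathbf Y):=\var(T\mid \mathbf Y)=\frac1{N^2}\sum_{a,a'}\E[\phi_a\phi_{a'}]\,Y_aY_{a'} .
\]
Here $\E[\phi_a\phi_{a'}]$ is a design constant. Evaluated at independent Bernoulli indicators, it equals $|\mathcal D(a)\cap\mathcal D(a')|/(q\,p(1-p))$, and it vanishes for non-overlapping pairs. Moreover $\var(\hat\tau_{EARL})=\E[V(\mathbf Y)]$, by the law of total variance with a zero conditional mean.

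The plan has three steps. \emph{Step 1:} show $V(\mathbf Y)/\E V(\mathbf Y)\stackrel{p}{\to}1$. \emph{Step 2:} show $\hat V_{RI}/V(\mathbf Y)\stackrel{p}{\to}1$. \emph{Step 3:} show $\hat\tau_{EARL}/\sqrt{\var(\hat\tau_{EARL})}\stackrel{d}{\to}\mathcal N(0,1)$, which is Theorem~\ref{th:asymptotic-normality}. That theorem applies because $H_0^*$ is a special case of Assumption~A\ref{a:4} (all $\beta$'s constant) and locality holds trivially. Slutsky's lemma then gives both conclusions. The convergence $\hat V_{RI}/\var(\hat\tau_{EARL})\to1$ follows from Steps 1 and 2, and the rejection probability tends to $\Pr\{|\mathcal N(0,1)|>z_{1-\alpha/2}\}=\alpha$.

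For Step 1, I bound $\var(V(\mathbf Y))$. $V$ is a quadratic form in independent variables $Y_a$ with coefficients $w_{aa'}=\E[\phi_a\phi_{a'}]/N^2$, where $|w_{aa'}|\lesssim d_{\mathcal A}/N^2$ and $w_{aa'}$ is nonzero only on $\mathcal P$. Expanding $\var\bigl(\sum w_{aa'}Y_aY_{a'}\bigr)$, a covariance $\cov(Y_aY_{a'},Y_bY_{b'})$ vanishes unless $\{a,a'\}\cap\{b,b'\}\neq\emptyset$, by independence of the errors. Each nonzero covariance is bounded by $L$ through the fourth-moment bound. Counting the contributing index configurations gives at most $N (d_{\mathcal A}d_{\mathcal R})^3$ terms, hence
\[
\var(V)\lesssim N(d_{\mathcal A}d_{\mathcal R})^3\, d_{\mathcal A}^2/N^4=O\bigl(d_{\mathcal A}^5 d_{\mathcal R}^3/N^3\bigr).
\]
By Assumption~A\ref{a:non-degeneracy}, $(\E V)^2\ge\delta^2/N^2$, so the ratio of $\var(V)$ to $(\E V)^2$ is $O(d_{\mathcal A}^5d_{\mathcal R}^3/N)=o(1)$ under $d_{\mathcal R}^4d_{\mathcal A}^{10}=o(N)$. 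Chebyshev's inequality completes the step.

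Step 2 is the main obstacle, since it is where $J_N$ enters. Conditionally on $\mathbf Y$, $\hat V_{RI}$ is the empirical variance of $J$ i.i.d.\ copies of $T$. Its conditional relative error is therefore controlled by $\E[T^4\mid\mathbf Y]/(J\,V(\mathbf Y)^2)$. I would bound the fourth moment by expanding $T^4=N^{-4}\sum\phi_{a_1}\cdots\phi_{a_4}Y_{a_1}\cdots Y_{a_4}$. The term $\E[\phi_{a_1}\cdots\phi_{a_4}]$ vanishes unless every edge variable $(S_r,Z_r)$ that appears does so at least twice, so only graph-connected quadruples survive. Each weight $\phi_a$ is bounded by $d_{\mathcal A}/(q\,p(1-p))$, and taking expectations over $\mathbf Y$ uses $\E[Y^4]\le L$ and Hölder's inequality. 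This should yield $\E\,\E[T^4\mid\mathbf Y]\lesssim d_{\mathcal A}^6d_{\mathcal R}^2/N^2$, possibly with a slightly different power of the degrees, which is the delicate counting. Combining this with $V\gtrsim\delta/N$ with probability tending to one (from Step 1), Markov's inequality gives a relative error of $O_p(d_{\mathcal A}^6d_{\mathcal R}^2/J_N)=o_p(1)$. The centring by the sample mean contributes a lower-order $O_p(1/J)$ correction.
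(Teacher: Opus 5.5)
Your proposal is correct and follows essentially the same route as the paper. It identifies $\var(\hat\tau_{EARL})$ with the mean of the conditional variance $v=N^{-2}\sum A_{a_1,a_2}Y_{a_1}Y_{a_2}$, shows $v$ concentrates via a covariance count over overlapping pairs (your count $N(d_{\mathcal A}d_{\mathcal R})^3$ is looser than the paper's $4N(d_{\mathcal A}d_{\mathcal R})^2$ but still suffices), controls the Monte Carlo error through $\E[W_1^4]=O(d_{\mathcal A}^6d_{\mathcal R}^2/N^2)$, and concludes with Theorem~\ref{th:asymptotic-normality} and Slutsky. One small correction: in the fourth-moment expansion, quadruples that split into two overlapping pairs survive as well as graph-connected ones, and in fact they dominate with $O(N^2(d_{\mathcal A}d_{\mathcal R})^2)$ terms; that is precisely what produces your (correct) bound $d_{\mathcal A}^6d_{\mathcal R}^2/N^2$.
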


\section{Simulation Study}\label{sec:simulations}

We evaluate EARL in a semi-synthetic simulation study designed to stay close to the empirical evaluations of \citet{doudchenko2020causal} and \citet{harshaw2023erl}, while introducing partial assignment with allocation rates $q$ ranging from 10\% to 100\%. The full code, together with instructions reproducing every number and figure in this section, accompanies the paper.\footnote{The code is available at \url{https://github.com/akurennoy/earl}.} As in any simulation-based evaluation, the data-generating processes are necessarily stylised. We mitigate this in two ways: by building the experiment graphs from real datasets whose degree distributions are far from any convenient parametric form, and by including scenarios that \emph{violate} our assumptions. Nevertheless, the results should be read as controlled evidence about estimator behaviour with a known ground truth, which real experiments cannot provide, rather than as a substitute for field validation.

\subsection{Setup}\label{subsec:sim-setup}

\textbf{Graphs.} We use three bipartite graphs (Table~\ref{tab:graphs}). \emph{Synthetic}: the design of \cite[Section~6.1]{doudchenko2020causal}: $N=1000$ analysis units, $M=100$ randomisation units, each analysis unit connected to $m_a\sim U\{1,\ldots,10\}$ randomisation units chosen at random. \emph{Amazon}: the user--item graph built from the Amazon product-review data of \citet{he2016ups,mcauley2015image}, subsampling $N=1000$ users with between 2 and 50 reviews, similarly to \cite[Section~6.2]{doudchenko2020causal}; items are randomisation units, mirroring an experiment that changes an item-level mechanism (e.g., pricing) and measures customer satisfaction \cite{harshaw2023erl}. \emph{MovieLens}: the full MovieLens-100K user--movie graph \cite{harper2015movielens}, a deliberately dense stress case ($d_{\mathcal A}=737$) far outside our sparsity conditions.

\begin{table}
\caption{Experiment graphs used in the simulation study.}
\label{tab:graphs}
\begin{tabular}{lrrrrr}
\toprule
Graph & $N$ & $M$ & $|\mathcal C|$ & $d_{\mathcal A}$ & $d_{\mathcal R}$\\
\midrule
Synthetic & 1000 & 100 & 5415 & 10 & 76\\
Amazon & 1000 & 2718 & 3273 & 42 & 18\\
MovieLens & 943 & 1682 & 100000 & 737 & 583\\
\bottomrule
\end{tabular}
\end{table}

\textbf{Outcome models.} Outcomes follow $Y_a = e_a + \varepsilon_a$ with $\varepsilon_a\iidsiminline\mathcal N(0,\,\sigma_\varepsilon^2)$, $\sigma_\varepsilon^2 = 0.5$, as in \cite{doudchenko2020causal} (here and below the second argument of $\mathcal N$ denotes the variance) and
\begin{equation*}
    e_a = \alpha_a + \beta_a F_a + \gamma_u (1-G_a),
\end{equation*}
where $F_a$ is the joint coverage--exposure \eqref{eq:Fa} and the term $\gamma_u(1-G_a)$ captures the effect of unassigned connections (e.g., their participation in a concurrent experiment). The GATE equals $N^{-1}\sum_a\beta_a$. Following \cite[Section~8]{harshaw2023erl}, coefficients are drawn once and then held fixed across replications: \emph{S1} (positive heterogeneous effects) $\alpha_a\sim\mathcal N(-1,\,3/8)$, $\beta_a\sim\mathcal N(2,\,1)$; \emph{S2} (near-zero effects) $\alpha_a\sim\mathcal N(2,\,3/8)$, $\beta_a\sim\mathcal N(0,\,1/2)$; \emph{S3} (non-linear response, violating A\ref{a:4}) $e_a = \alpha_a + 4F_a(F_a-1)+\gamma_u(1-G_a)$ with $\alpha_a\sim\mathcal N(0,\,1/8)$ and $GATE=0$. Each scenario is run with $\gamma_u=1$ (unassigned units carry a concurrent-test effect), and S1 and S2 also with $\gamma_u=0$ (unassigned units deliver the control experience).

\textbf{Estimators.} We compare: \emph{EARL} \eqref{eq:earl}; \emph{ERL-drop} \eqref{eq:erl-drop}, the standard ERL on the reduced graph; \emph{IPW-assign}, the Horvitz--Thompson full-/empty-exposure contrast \cite{aronow2017estimating,zigler2021bipartite} computed on the reduced graph with assignment-only propensities $p^{k_a}$, $(1-p)^{k_a}$ ($k_a$ the number of participating connections); and \emph{IPW-alloc}, the same contrast on the full graph with propensities accounting for both allocation and assignment, $(qp)^{|\mathcal D(a)|}$ and $(q(1-p))^{|\mathcal D(a)|}$, in the spirit of the eligibility-aware propensities of \cite{tan2025estimating}. For each graph, scenario, and $q\in\{0.1,\,0.2,\ldots,\,1.0\}$ we run 1000 replications with $p=1/2$ and report the bias, the standard deviation, and the root mean squared error (RMSE) of each estimator.

\subsection{Results}\label{subsec:sim-results}

\begin{figure*}
\centering
\includegraphics[width=\textwidth]{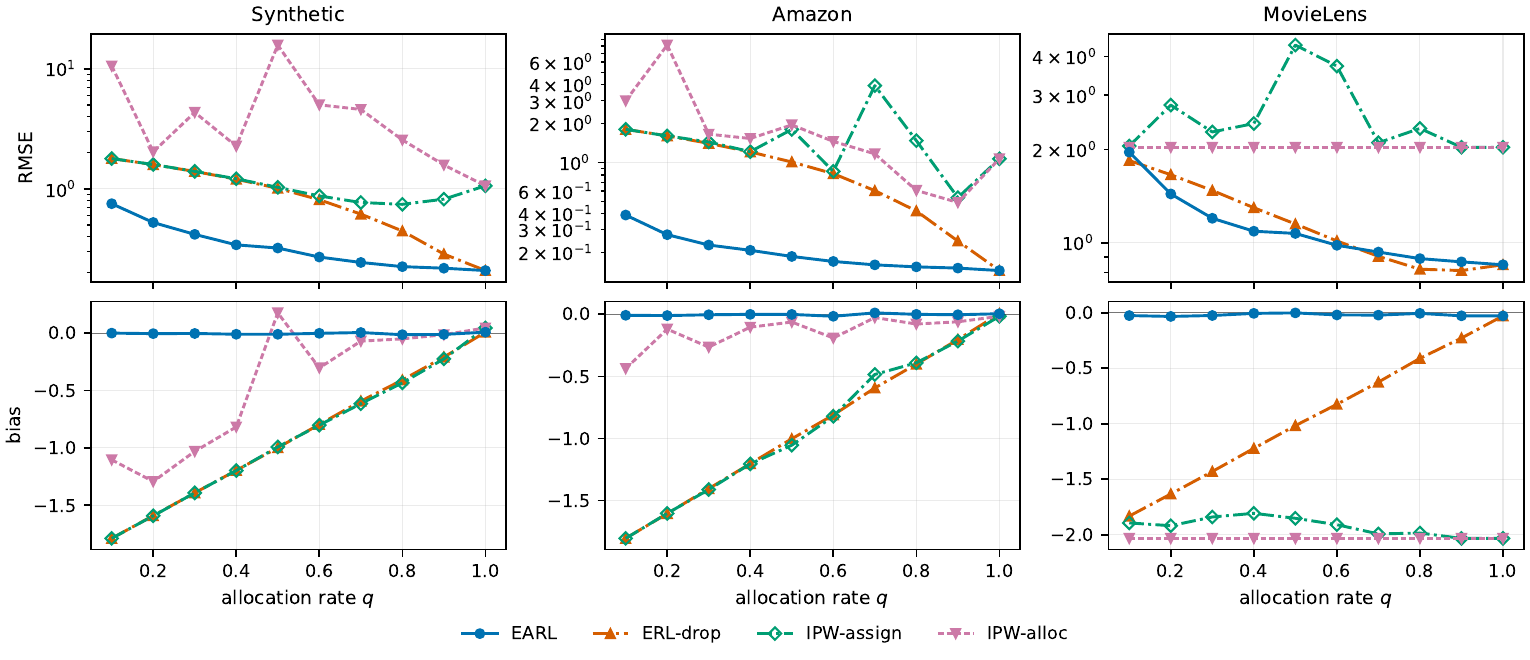}
\caption{RMSE (top, log scale) and bias (bottom) as functions of the allocation rate $q$ in scenario S1 ($\gamma_u=1$). EARL is unbiased everywhere and has the lowest RMSE at almost every allocation rate; ERL-drop has bias $-(1-q)\cdot GATE$; the IPW benchmarks are erratic (Synthetic, Amazon) or return zero in every replication (IPW-alloc on MovieLens, where a pure exposure was never observed). At $q=1$, EARL and ERL-drop coincide with the standard ERL estimator.}
\label{fig:rmse-bias}
\end{figure*}

\begin{table*}
\caption{RMSE (with bias in parentheses) at allocation rates $q\in\{0.2,\,0.5,\,0.8\}$, scenario S1 with a concurrent-test effect on unassigned units ($\gamma_u=1$); 1000 replications, $p=0.5$. Best RMSE per column in bold. GATE $\approx 2.0$ in all three graphs.}
\label{tab:rmse}
\centering
\small
\begin{tabular}{llccccccccc}
\toprule
& & \multicolumn{3}{c}{Synthetic} & \multicolumn{3}{c}{Amazon} & \multicolumn{3}{c}{MovieLens}\\
\cmidrule(lr){3-5}\cmidrule(lr){6-8}\cmidrule(lr){9-11}
Estimator & & $q=0.2$ & $q=0.5$ & $q=0.8$ & $q=0.2$ & $q=0.5$ & $q=0.8$ & $q=0.2$ & $q=0.5$ & $q=0.8$\\
\midrule
EARL & & \textbf{0.53} {\scriptsize$(-0.00)$} & \textbf{0.32} {\scriptsize$(-0.01)$} & \textbf{0.23} {\scriptsize$(-0.01)$} & \textbf{0.28} {\scriptsize$(-0.01)$} & \textbf{0.19} {\scriptsize$(-0.00)$} & \textbf{0.16} {\scriptsize$(-0.00)$} & \textbf{1.44} {\scriptsize$(-0.03)$} & \textbf{1.07} {\scriptsize$(-0.00)$} & 0.89 {\scriptsize$(-0.01)$}\\
ERL-drop & & 1.59 {\scriptsize$(-1.59)$} & 1.01 {\scriptsize$(-1.00)$} & 0.45 {\scriptsize$(-0.41)$} & 1.61 {\scriptsize$(-1.61)$} & 1.01 {\scriptsize$(-1.00)$} & 0.42 {\scriptsize$(-0.40)$} & 1.66 {\scriptsize$(-1.63)$} & 1.15 {\scriptsize$(-1.02)$} & \textbf{0.82} {\scriptsize$(-0.41)$}\\
IPW-assign & & 1.60 {\scriptsize$(-1.59)$} & 1.03 {\scriptsize$(-0.99)$} & 0.74 {\scriptsize$(-0.43)$} & 1.60 {\scriptsize$(-1.60)$} & 1.80 {\scriptsize$(-1.05)$} & 1.47 {\scriptsize$(-0.39)$} & 2.78 {\scriptsize$(-1.92)$} & 4.35 {\scriptsize$(-1.85)$} & 2.34 {\scriptsize$(-1.99)$}\\
IPW-alloc & & 2.04 {\scriptsize$(-1.29)$} & 15.79 {\scriptsize$(+0.18)$} & 2.56 {\scriptsize$(-0.05)$} & 8.13 {\scriptsize$(-0.12)$} & 1.95 {\scriptsize$(-0.06)$} & 0.61 {\scriptsize$(-0.08)$} & 2.03 {\scriptsize$(-2.03)$} & 2.03 {\scriptsize$(-2.03)$} & 2.03 {\scriptsize$(-2.03)$}\\
\bottomrule
\end{tabular}
\end{table*}

\textbf{Point estimation.} Figure~\ref{fig:rmse-bias} and Table~\ref{tab:rmse} summarise scenario S1, in which a genuine effect is present. Two findings stand out. First, EARL is unbiased at every allocation rate on all three graphs (largest absolute bias $0.03$, within Monte-Carlo error), while ERL-drop exhibits almost exactly the bias $-(1-q)\cdot GATE$ predicted by the theory~--- at $q=0.2$ it recovers barely a fifth of the true effect. IPW-assign inherits the same attenuation, because dropping unassigned units understates every unit's true connectivity, and IPW-alloc, though unbiased in principle, is so noisy that its empirical mean is still far from the target after 1000 replications. Second, EARL delivers the lowest RMSE in every cell of Table~\ref{tab:rmse} except one (on MovieLens at $q=0.8$ the biased ERL-drop edges it out), with the reduction relative to the strongest existing baseline reaching a factor of six on the Amazon graph at low allocation rates ($0.28$ vs $1.61$ at $q=0.2$) and an order of magnitude relative to the IPW benchmarks. The IPW estimators illustrate the variance pathology that motivates ERL-type methods: on MovieLens (degrees in the hundreds) a fully treated or fully untreated neighbourhood was never observed in any replication, so IPW-alloc returned $0$ throughout (its identifying exposure events have positive but astronomically small probability), yielding empirical bias $-GATE$ and zero empirical variance, and IPW-assign is nearly degenerate, while on sparser graphs their error is dominated by rare, enormous weights, and partial assignment amplifies the problem by shrinking the exposure propensities from $p^{|\mathcal D(a)|}$ to $(qp)^{|\mathcal D(a)|}$.

The $\gamma_u=0$ runs reveal the price of the $1/q$ reweighting. Since $\hat\tau_{EARL} = \hat\tau_{drop}/q$ under Bernoulli allocation, EARL's standard deviation is exactly $1/q$ times that of ERL-drop, and whether unbiasedness wins in RMSE depends on how this variance premium compares with the removed bias $(1-q)\cdot GATE$. The variance of both estimators is driven largely by the \emph{level} of the outcomes (the weights multiply $Y_a$, not a centred version of it): in S1 with $\gamma_u=1$ the concurrent-test term happens to offset the negative outcome intercept ($\E[\alpha_a]=-1$), so the level is small and EARL dominates broadly, as reported above; with $\gamma_u=0$ the level is of the order of the GATE itself, and on the synthetic ($q\le 0.2$) and MovieLens (all $q$) graphs the reduced-graph estimator's RMSE falls below EARL's, while on Amazon EARL still wins at every $q<1$ (by up to a factor of five). EARL remains unbiased throughout. The practical implication is standard: like all linear reweighting estimators, EARL should preferably be applied to outcomes centred by a fixed or pre-treatment baseline, or covariate-adjusted through a formally justified procedure (cf.\ CA-ERL \cite{shi2025scalableanalysisbipartiteexperiments}) and when the outcome level cannot be reduced and $q$ is very small, the experimenter faces a genuine bias--variance choice. Our closed-form variance estimator makes this choice quantifiable from the experiment itself, before acting on the estimate: since $\hat\tau_{drop}=q\,\hat\tau_{EARL}$, the two mean squared errors can be compared via the plug-in quantities $\hat V$ and $q^2\hat V + (1-q)^2\hat\tau^2_{EARL}$ (the latter overestimates its target by $(1-q)^2\var(\hat\tau_{EARL})$ in expectation; substituting $\hat\tau^2_{EARL}-\hat V$ removes the excess at the price of possible negativity), so the shrunk estimator is indicated when the estimated effect is small relative to EARL's sampling noise.

In scenario S2 (effects indistinguishable from zero, $GATE\approx0.02$) the ranking reverses: the shrinkage bias of ERL-drop is \emph{multiplicative} in the GATE, so with nothing to attenuate it attains the lowest RMSE among the non-degenerate estimators (on MovieLens even the identically-zero IPW-alloc trivially ``wins''), whereas EARL pays the variance cost of unbiasedness. We report this for transparency and add two remarks. First, the relative bias of ERL-drop is $100(1-q)\%$ of the estimand whatever its size, so the regime in which it wins on RMSE is that of near-zero effects; whenever the effect is consequential (because it is large, or because even a small effect matters at the operating scale) ERL-drop understates it by the factor $q$. Second, using the reduced graph is rarely a deliberate choice: an analysis pipeline that only sees the enrolled units reports $\hat\tau_{drop}$ with a confidence interval tightly centred on $q\cdot GATE$, giving the practitioner no indication that a correction is needed, nor by how much; and once partial assignment \emph{is} modelled, so that $q$ and the full graph are known, EARL is available at no extra cost. In scenario S3 (non-linear response) the linear-response guarantees behind EARL and ERL-drop no longer apply and both exhibit visible bias, while the IPW contrasts remain unbiased in principle but are far too unstable to benefit from it; EARL's error decreases towards the ERL error as $q\to1$, and no method dominates. Full tables for S2, S3, and the $\gamma_u=0$ configurations are reproducible from the accompanying code.

\begin{table}
\caption{Randomisation inference under the null of zero effects: empirical size of the nominal 5\% test and the ratio of the mean $\hat V_{RI}$ to the true variance of $\hat\tau_{EARL}$; 500 replications, $J=200$. When unassigned units carry a concurrent-test effect ($\gamma_u=1$), the strong null \eqref{def:strong-null} is violated even though all treatment effects are zero, and $\hat V_{RI}$ drifts away from the true variance.}
\label{tab:ri}
\begin{tabular}{llcccc}
\toprule
Graph & $\gamma_u$ & $q$ & Size & $\overline{\hat V}_{RI}/\var(\hat\tau)$\\
\midrule
Synthetic & 0 & 0.3 & 0.050 & 1.01\\
Synthetic & 0 & 0.7 & 0.060 & 0.99\\
Synthetic & 1 & 0.3 & 0.040 & 1.09\\
Synthetic & 1 & 0.7 & 0.040 & 1.17\\
Amazon & 0 & 0.3 & 0.042 & 1.08\\
Amazon & 0 & 0.7 & 0.052 & 0.95\\
Amazon & 1 & 0.3 & 0.046 & 1.05\\
Amazon & 1 & 0.7 & 0.050 & 1.02\\
\bottomrule
\end{tabular}
\end{table}

\textbf{Randomisation inference.} Table~\ref{tab:ri} evaluates the randomisation-inference procedure of Section~\ref{subsubsec:ri} under a data-generating process with all unit-level treatment effects equal to zero ($\beta_a\equiv 0$). When the strong null $H_0^*$ holds exactly ($\gamma_u=0$), the empirical size of the nominal 5\% test is within Monte-Carlo error of the nominal level and $\hat V_{RI}$ matches the true variance (ratios $0.95$--$1.08$). When unassigned units carry a concurrent-test effect ($\gamma_u=1$)~--- so that expected outcomes depend on $\mathbf S$ although all treatment effects are zero~--- the variance ratio drifts up to $1.17$ on the synthetic graph, illustrating the theoretical point of Section~\ref{subsubsec:ri}: re-randomisation reproduces the estimator's variance only under the allocation-inclusive null. In this particular data-generating process the drift is conservative (the test under-rejects), but its direction is not guaranteed in general, which is why Theorem~\ref{th:randomisation-inference} requires $H_0^*$; the closed-form estimator $\hat V$ of Section~\ref{subsubsec:variance} requires no null hypothesis, although it does rest on the homogeneous linear conditional-mean model and the other conditions of Theorem~\ref{th:asymptotic-variance}.

\section{Conclusion}\label{sec:conclusion}

We studied bipartite experiments in which only part of the randomisation-unit population is enrolled~--- a common regime on experimentation platforms that cap rollouts and run concurrent tests. The standard practice of restricting the analysis to the enrolled units attenuates the estimated global treatment effect. To address this problem, we proposed EARL, an exposure- and allocation-reweighted linear estimator that removes this bias by upweighting every participating connection by its inverse allocation propensity. EARL is provably unbiased whatever experience the unassigned units receive (so long as they contribute to expected outcomes additively), minimum-variance in its class of linear reweighting estimators, and consistent and asymptotically normal under the same graph-sparsity conditions known for full-assignment ERL; it is accompanied by two variance estimators: an exactly unbiased closed-form estimator and a randomisation-inference procedure, for which we provide validity theory and identify the allocation-inclusive null hypothesis it requires. Simulations on real bipartite graphs confirm the theory, show up to six-fold accuracy gains over the strongest existing baseline (in some configurations more than an order of magnitude over Horvitz--Thompson alternatives), and delineate the low-allocation, high-outcome-level regime where the variance premium of unbiasedness makes outcome centring advisable.

Several directions remain open. Natural extensions include unit-specific allocation and assignment probabilities, covariate adjustment in the spirit of CA-ERL \cite{shi2025scalableanalysisbipartiteexperiments}, weighted experiment graphs, and designs that optimise the allocation split across concurrent experiments jointly with the exposure distribution \cite{harshaw2023erl,brennan2022cluster}. On the theoretical side, characterising the efficiency frontier beyond the linear reweighting class, and relaxing the homogeneity assumption underlying variance estimation, both appear fruitful.

\bibliographystyle{ACM-Reference-Format}
\bibliography{references}

\appendix

\section{Proofs and Auxiliary Statements}\label{app:proofs}

Throughout the appendix we use the shorthand
\begin{equation}\label{def:Kw}
    \tilde Z_r = \frac{Z_r-p}{p(1-p)},\qquad
    K_w = \frac{d_{\mathcal A}}{q\,p(1-p)},
\end{equation}
so that the EARL weight \eqref{def:phi} is $\phi_a = \sum_{r\in\mathcal D(a)}(S_r/q)\tilde Z_r$; the constant $K_w$, whose subscript stands for ``weight'', is a uniform bound on these weights (Lemma~\ref{lem:weight-bound}). We write $\hat\gamma_a=\phi_aY_a$ for the unit-level terms of the estimator, so that $\hat\tau_{EARL}=N^{-1}\sum_{a\in\mathcal A}\hat\gamma_a$, and
\begin{equation}\label{def:gamma-a}
    \gamma_a = \mu_a(1,\,1) - \mu_a(1,\,0)
\end{equation}
for the unit-level estimands. We will also repeatedly use the set
\begin{equation*}
    \mathcal T(a) = \{a'\in\mathcal A\mid \mathcal D(a')\cap \mathcal D(a) \neq \emptyset\}
\end{equation*}
of analysis units sharing a connected randomisation unit with $a$; note $|\mathcal T(a)|\le d_{\mathcal A}d_{\mathcal R}$ and $a\in\mathcal T(a)$ whenever $\mathcal D(a)\neq\emptyset$.

\subsection{Unbiasedness of the EARL Estimator}

\begin{proof}[Proof of Theorem~\ref{th:unbiasedness}]
Step 1. We begin by showing that
\begin{equation}\label{eq:EhatG=}
    \E[\hat \gamma_a] = \frac 1 q \cdot\frac{\partial\mu_a}{\partial y}(q,\,p)\qquad\forall\,a\in\mathcal A.
\end{equation}
For this purpose, consider the following partial Horvitz--Thompson reweighting of the observation $Y_a$, which tilts the assignment rate of the \emph{participating} units from $p$ to $y$ while leaving the allocation untouched:
\begin{equation*}
    T_a(y) = Y_a\prod_{r\in\mathcal{D}(a)}\left[S_r\left(\frac{y}{p}\right)^{Z_r}\left(\frac{1-y}{1-p}\right)^{1-Z_r} + (1-S_r)\right],\ \  y\in(0,\,1),
\end{equation*}
where a product over an empty set is defined to be one. We claim that
\begin{equation}\label{eq:HTunb}
    \E[T_a(y)] = \mu_a(q,\,y)\qquad \forall\,y\in(0,\,1).
\end{equation}
Indeed, by the law of iterated expectations, $\E[T_a(y)]$ equals the expectation of $e_a(\mathbf S,\,\mathbf Z)$ multiplied by the product in the display. Conditionally on $\mathbf S$, taking the expectation over $\mathbf Z$ replaces the distribution of $Z_r$ by $\mathrm{Bernoulli}(y)$ for every $r$ with $S_r=1$ (the factor $(y/p)^{z_r}((1-y)/(1-p))^{1-z_r}$ is exactly the likelihood ratio between $\mathrm{Bernoulli}(y)$ and $\mathrm{Bernoulli}(p)$), while the coordinates with $S_r=0$ keep the distribution $\mathrm{Bernoulli}(p)$. By Assumption~A\ref{a:mu}(ii), $e_a$ does not depend on the assignment coordinates of units with $S_r=0$, so the latter coordinates can equivalently be drawn from $\mathrm{Bernoulli}(y)$ without changing the expectation; by Assumption~A\ref{a:mu}(i), neither do the coordinates outside $\mathcal D(a)$ matter. Hence
\begin{equation*}
    \E[T_a(y)\mid \mathbf S] = \E_{Z_r\iidsim \mathrm{Bern}(y)}\left[e_a(\mathbf S,\,\mathbf Z)\right],
\end{equation*}
and averaging over $\mathbf S\sim \mathrm{Bernoulli}(q)^{\otimes M}$ yields \eqref{eq:HTunb} by the definition \eqref{eq:ma} of $\mu_a$.

For any $\omega\in\Omega$ the map $y\mapsto T_a(y)$ is a polynomial and therefore differentiable, and for all $y\in(0,\,1)$,
\begin{equation*}
    \left|\frac{\partial T_a}{\partial y}(y)\right|
    \le |Y_a|\, d_{\mathcal A}\left(\max\left(\frac 1p,\,\frac 1{1-p}\right)\right)^{d_{\mathcal A}}
\end{equation*}
(each bracketed factor and each of its derivatives in $y$ is bounded by $\max(1/p,\,1/(1-p))$), where the right-hand side is an integrable random variable. Thus, we can swap differentiation and integration, i.e.,
\begin{equation}\label{eq:swap}
    \E\left[\frac{\partial T_a}{\partial y}(y)\right] = \frac{\partial}{\partial y}\E[T_a(y)] \stackrel{\eqref{eq:HTunb}}{=} \frac{\partial \mu_a}{\partial y}(q,\,y),\quad y\in(0,\,1).
\end{equation}
Now, by direct computation (each bracketed factor equals $1$ at $y=p$, and its derivative at $y=p$ equals $S_r\tilde Z_r$),
\begin{equation}\label{eq:d2HT=}
    \frac{\partial T_a}{\partial y}(p) =
    \sum_{r\in\mathcal{D}(a)} S_r\tilde Z_r\, Y_a = q\,\phi_a Y_a = q\,\hat\gamma_a.
\end{equation}
The combination of \eqref{eq:swap} at $y=p$ and \eqref{eq:d2HT=} gives \eqref{eq:EhatG=}.

Step 2. The next step is to prove that
\begin{equation}\label{eq:GATE=d2mu}
    \exists\,\eta,\,\xi\in(0,\,1)\colon\quad GATE = \frac{\partial^2\mu}{\partial x \partial y}(\eta,\,\xi).
\end{equation}
From definitions \eqref{eq:ma} and \eqref{eq:mu}, it follows that the function $\mu$ is a polynomial and hence $\mu(1,\,\cdot)$ is continuous on $[0,\,1]$ and differentiable on $(0,\,1)$, satisfying the mean-value theorem \cite[Theorem~5.10]{rudin1976principles}. Thus, there exists $\xi\in(0,\,1)$ such that
\begin{equation}\label{eq:mv-mu}
    \mu(1,\,1)-\mu(1,\,0) = \frac{\partial\mu}{\partial y}(1,\,\xi).
\end{equation}
Next, observe that
\begin{equation}\label{eq:mu0const}
    \mu(0,\,\cdot) = const,
\end{equation}
because if none of the randomisation units participates in the experiment, the expected outcome of any analysis unit does not depend on the rate at which participating units would be assigned to treatment (formally, this is Assumption~A\ref{a:mu}(ii) applied to $\mathbf s = \mathbf 0$ in \eqref{eq:ma}). Consequently,
\begin{equation*}
    \frac{\partial\mu}{\partial y}(0,\,\cdot) \equiv 0,\qquad\textup{in particular}\qquad
    \frac{\partial\mu}{\partial y}(0,\,\xi) = 0.
\end{equation*}
Using the last equality we can rewrite \eqref{eq:mv-mu} as
\begin{equation*}
    \mu(1,\,1)-\mu(1,\,0) = \frac{\partial\mu}{\partial y}(1,\,\xi) - \frac{\partial\mu}{\partial y}(0,\,\xi).
\end{equation*}
Applying the mean-value theorem once again (to the function $x\mapsto \frac{\partial\mu}{\partial y}(x,\,\xi)$), we obtain that
\begin{equation}\label{eq:deltamu=d2dxdy}
    \mu(1,\,1)-\mu(1,\,0) = \frac{\partial^2\mu}{\partial x\partial y}(\eta,\,\xi)
\end{equation}
for some $\eta\in(0,\,1)$. The definition of GATE \eqref{eq:gate} and the relation \eqref{eq:deltamu=d2dxdy} imply \eqref{eq:GATE=d2mu}.

Step 3. By the same argument applied to the function $x\mapsto\frac{\partial \mu}{\partial y}(x,\,p)$ on the interval $[0,\,q]$ (whose value at $0$ vanishes by \eqref{eq:mu0const}), averaging \eqref{eq:EhatG=} over $a\in\mathcal A$ yields
\begin{equation}\label{eq:EhatTau}
    \E[\hat\tau^{\phantom{1}}_{EARL}] = \frac 1q\cdot \frac{\partial\mu}{\partial y}(q,\,p) = \frac 1q \cdot q\cdot\frac{\partial^2\mu}{\partial x\partial y}(\tilde x,\,p) = \frac{\partial^2\mu}{\partial x\partial y}(\tilde x,\,p)
\end{equation}
for some $\tilde x\in(0,\,q)$.

Step 4. By combining \eqref{eq:GATE=d2mu} with \eqref{eq:EhatTau} and applying the mean-value theorem to the function $t\mapsto \frac{\partial^2\mu}{\partial x\partial y}\left((\tilde x,\,p) + t\,(\eta-\tilde x,\,\xi - p)\right)$ on $[0,\,1]$, we find that there exists $t\in(0,\,1)$ such that, with $(x_t,\,y_t) = (\tilde x,\,p) + t(\eta - \tilde x,\,\xi - p)$,
\begin{eqnarray*}
    \E[\hat\tau_{EARL}] - GATE &=& \frac{\partial^3\mu}{\partial x^2\partial y}(x_t,\,y_t)\,(\tilde x-\eta) \nonumber\\ &&{}+ \frac{\partial^3\mu}{\partial x\partial y^2}(x_t,\,y_t)\,(p-\xi).
\end{eqnarray*}
Since $|\tilde x-\eta| < 1$ and $|p-\xi| < 1$, the bound \eqref{eq:bound} readily follows.

Step 5. Finally, suppose that Assumption~A\ref{a:4} holds true. In that case, for any analysis unit $a\in\mathcal A$ the expected outcome function $\mu_a$ defined in \eqref{eq:ma} is a quadratic polynomial,
\begin{eqnarray*}
    \mu_a(x,\,y) &\stackrel{\eqref{eq:ma}}{=}& \E_{S_r\iidsim \mathrm{Bern}(x),\,Z_r\iidsim \mathrm{Bern}(y),\,\mathbf S\ind \mathbf Z}[e_a(\mathbf S,\,\mathbf Z)] \\
    &\stackrel{A\ref{a:4}}{=}& \sum_{r=1}^M(\beta^a_{r,\,0} + \beta^a_{r,\,1}(1-x) + \beta^a_{r,\,2}\, xy + \beta^a_{r,\,3}\,x(1-y)),
\end{eqnarray*}
which implies that the average expected outcome $\mu$ defined in \eqref{eq:mu} is also a quadratic polynomial of this form, and hence $\frac{\partial^3\mu}{\partial x^2\partial y}$ and $\frac{\partial^3\mu}{\partial x\partial y^2}$ are zero everywhere on $(0,\,1)\times (0,\,1)$. Therefore, the right-hand side of \eqref{eq:bound} is zero, implying \eqref{eq:unbiasedness}.
\end{proof}

\begin{remark}\label{rmk:unbiasedness}
    \textup{
    From the proof of Theorem~\ref{th:unbiasedness} it can be seen that under Assumptions~A\ref{a:1}--A\ref{a:4}, each individual $\hat\gamma_a$ is unbiased for its respective estimand $\gamma_a$ \eqref{def:gamma-a}. Indeed, in Step 1 we establish that $\E[\hat\gamma_a] = q^{-1}\frac{\partial\mu_a}{\partial y}(q,\,p)$. The logic of Steps 2--3 applies to each individual function $\mu_a$ and shows that $\E[\hat\gamma_a] = \frac{\partial^2\mu_a}{\partial x \partial y}(\tilde x_a,\,p)$ and $\gamma_a = \frac{\partial^2\mu_a}{\partial x \partial y}(\eta_a,\,\xi_a)$ for some interior points. Finally, Step 5 uses Assumption~A\ref{a:4} to show that $\mu_a$ is a quadratic polynomial with constant mixed derivative, hence $\E[\hat\gamma_a] = \gamma_a$.}
\end{remark}

\subsection{Optimality within the Linear Reweighting Class}\label{app:optimality}

We begin with a lemma that is used in the proof of Proposition~\ref{prop:optimality} below and in several later proofs; note that its two parts require different subsets of the assumptions.

\begin{lemma}\label{lem:cov=0}
    Let the random variables $S_r,\,Z_r\colon\Omega\mapsto\{0,\,1\}$, $r=1,\,\ldots,\,M$, satisfy Assumption A\ref{a:1}.

    Then
    \begin{itemize}
        \item[\textup{(a)}] if Assumption~A\ref{a:no-autocorrelation} holds true (in particular, the outcome variables are square integrable), then
            $\E[Y_{a_1}Y_{a_2}\mid \mathbf{S},\,\mathbf{Z}] = e_{a_1}(\mathbf{S},\,\mathbf{Z})\,e_{a_2}(\mathbf{S},\,\mathbf{Z})$ for all $a_1\neq a_2\in\mathcal A$;
        \item[\textup{(b)}] if, in addition, Assumptions~A\ref{a:mu} and A\ref{a:4} hold true, then for any two analysis units $a_1,\,a_2\in\mathcal{A}$ such that $\mathcal D(a_1)\cap \mathcal D(a_2) = \emptyset$ it holds that
        \begin{equation}\label{eq:cov=0}
            \E[(\hat\gamma_{a_1} - \gamma_{a_1})(\hat\gamma_{a_2} - \gamma_{a_2})] = 0.
        \end{equation}
    \end{itemize}
\end{lemma}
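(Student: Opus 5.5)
For part (a), I would write $Y_{a_i}=e_{a_i}(\mathbf S,\mathbf Z)+\varepsilon_{a_i}$ and expand the product $Y_{a_1}Y_{a_2}$ into four terms. Conditioning on $(\mathbf S,\mathbf Z)$, the factors $e_{a_i}$ are measurable and can be pulled out. By definition \eqref{eq:ea}, $\E[\varepsilon_a\mid\mathbf S,\mathbf Z]=0$, so both cross terms vanish. Assumption~A\ref{a:no-autocorrelation} then gives
\[
\E[\varepsilon_{a_1}\varepsilon_{a_2}\mid\mathbf S,\mathbf Z]=\E[\varepsilon_{a_1}\mid\mathbf S,\mathbf Z]\cdot\E[\varepsilon_{a_2}\mid\mathbf S,\mathbf Z]=0 .
\]
Square integrability makes every product integrable (Cauchy--Schwarz), which justifies the conditional expectations. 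This leaves $e_{a_1}e_{a_2}$.

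For part (b), first suppose $a_1\ne a_2$; the case $a_1=a_2$ with $\mathcal D(a_1)=\emptyset$ is trivial. If either $\mathcal D(a_i)$ is empty, then $\phi_{a_i}=0$ and, by the convention $B$/$\beta$ coefficients vanish, $\gamma_{a_i}=0$, so the claim is immediate. Otherwise, by Remark~\ref{rmk:unbiasedness}, $\E[\hat\gamma_{a_i}]=\gamma_{a_i}$. It therefore suffices to show $\E[\hat\gamma_{a_1}\hat\gamma_{a_2}]=\gamma_{a_1}\gamma_{a_2}$.

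Using part (a) and the tower property, and noting that $\phi_{a_i}$ is a function of $(\mathbf S,\mathbf Z)$,
\[
\E[\hat\gamma_{a_1}\hat\gamma_{a_2}]=\E\bigl[\phi_{a_1}e_{a_1}(\mathbf S,\mathbf Z)\,\phi_{a_2}e_{a_2}(\mathbf S,\mathbf Z)\bigr].
\]
By Assumption~A\ref{a:mu}(i), $\phi_{a_i}e_{a_i}$ depends only on the coordinates $\{(S_r,Z_r)\}_{r\in\mathcal D(a_i)}$. Under A\ref{a:1}, the pairs $(S_r,Z_r)$ are independent across $r$, and the two index sets are disjoint. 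Hence the two factors are independent, and the expectation factorises into $\E[\phi_{a_1}e_{a_1}]\,\E[\phi_{a_2}e_{a_2}]$. Each factor equals $\E[\hat\gamma_{a_i}]=\gamma_{a_i}$, again by conditioning on $(\mathbf S,\mathbf Z)$ and using Remark~\ref{rmk:unbiasedness}.

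The only delicate point is the locality step. A\ref{a:mu}(i) is stated as invariance of $e_a$ under changes outside $\mathcal D(a)$, and I would use it to define a function $\tilde e_{a}$ of the local coordinates only, equal to $e_a$ everywhere; this makes the measurability with respect to the local $\sigma$-algebra rigorous. Integrability of $\phi_a e_a$ follows from the fact that $\phi_a$ is bounded (by $K_w$) and $e_a$ is integrable. A\ref{a:4} is not needed beyond its role, through Remark~\ref{rmk:unbiasedness}, in establishing $\E[\hat\gamma_a]=\gamma_a$.
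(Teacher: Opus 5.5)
Your proposal is correct and follows the paper's proof essentially step for step. Part (a) goes through the decomposition $Y_a=e_a(\mathbf S,\mathbf Z)+\varepsilon_a$ with $\E[\varepsilon_a\mid\mathbf S,\mathbf Z]=0$ and Assumption~A\ref{a:no-autocorrelation}; part (b) goes through the tower property and part (a), the independence of $\phi_{a_1}e_{a_1}$ and $\phi_{a_2}e_{a_2}$ from locality and disjoint neighbourhoods, and Remark~\ref{rmk:unbiasedness}. You also treat the degenerate case $a_1=a_2$ separately (which forces $\mathcal D(a_1)=\emptyset$ and where part (a) does not apply), a point the paper's proof leaves implicit.
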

\begin{proof}
    Let $\varepsilon_a$, $a\in\mathcal{A}$, be the errors defined by \eqref{def:epsa}. From the definition, it follows that
    \begin{eqnarray}\label{eq:cond-E-eps-a=0}
        \E[\varepsilon_a\mid \mathbf{S},\,\mathbf{Z}] = 0\qquad\forall\,a\in\mathcal A.
    \end{eqnarray}
    Furthermore, due to Assumption~A\ref{a:no-autocorrelation} we have that for $a_1\neq a_2$,
    \begin{eqnarray}\label{eq:E-prod-eps=0}
        \E[\varepsilon_{a_1}\varepsilon_{a_2}\mid \mathbf{S},\,\mathbf{Z}]
        &\stackrel{A\ref{a:no-autocorrelation}}{=}& \E[\varepsilon_{a_1}\mid \mathbf{S},\,\mathbf{Z}]\cdot
        \E[\varepsilon_{a_2}\mid \mathbf{S},\,\mathbf{Z}]
        \stackrel{\eqref{eq:cond-E-eps-a=0}}{=} 0.
    \end{eqnarray}
    Using \eqref{eq:cond-E-eps-a=0} and \eqref{eq:E-prod-eps=0}, we obtain that
    \begin{eqnarray}\label{eq:E-prod-Y=E-prod-e}
        \E[Y_{a_1}Y_{a_2}\mid \mathbf{S},\,\mathbf{Z}] &=& \E[
        (e_{a_1}(\mathbf{S},\,\mathbf{Z}) + \varepsilon_{a_1})
        (e_{a_2}(\mathbf{S},\,\mathbf{Z}) + \varepsilon_{a_2})\mid \mathbf{S},\,\mathbf{Z}
        ]\nonumber\\
        &=& e_{a_1}(\mathbf{S},\,\mathbf{Z})\,e_{a_2}(\mathbf{S},\,\mathbf{Z}),
    \end{eqnarray}
    which proves (a).

    For part (b), consider any two analysis units $a_1$ and $a_2$ with $\mathcal{D}(a_1)\cap \mathcal{D}(a_2)=\emptyset$. Given Assumption~A\ref{a:mu} and the definition of $\phi_a$ \eqref{def:phi}, the random variables $\phi_{a_1} e_{a_1}(\mathbf{S},\,\mathbf{Z})$ and $\phi_{a_2}e_{a_2}(\mathbf{S},\,\mathbf{Z})$ are functions of non-overlapping subsets of allocation and assignment indicators and are therefore independent due to Assumption~A\ref{a:1}. Using this observation together with \eqref{eq:E-prod-Y=E-prod-e}, we conclude that
    \begin{eqnarray}\label{eq:=0}
        \E[\hat\gamma_{a_1}\hat\gamma_{a_2}]
        &=& \E\left[\phi_{a_1}\phi_{a_2}\,\E[Y_{a_1}Y_{a_2}\mid \mathbf{S},\,\mathbf{Z}]\right]
        \nonumber\\
        &\stackrel{\eqref{eq:E-prod-Y=E-prod-e}}{=}& \E\left[\phi_{a_1}e_{a_1}(\mathbf{S},\,\mathbf{Z})\,\phi_{a_2}e_{a_2}(\mathbf{S},\,\mathbf{Z})\right]\nonumber\\
        &=& \E\left[\phi_{a_1}e_{a_1}(\mathbf{S},\,\mathbf{Z})\right]\cdot\E[\phi_{a_2}e_{a_2}(\mathbf{S},\,\mathbf{Z})] \nonumber\\
        &=& \E[\hat\gamma_{a_1}]\,\E[\hat\gamma_{a_2}].
    \end{eqnarray}
    The last equality and Remark~\ref{rmk:unbiasedness} imply \eqref{eq:cov=0}, completing the proof.
\end{proof}

\begin{proof}[Proof of Proposition~\ref{prop:optimality}]
    Write $\tilde Z_r=(Z_r-p)/(p(1-p))$ and note $\E[\tilde Z_r]=0$, $\E[\tilde Z_rZ_r]=1$. Any $c\colon\{0,\,1\}\mapsto\R$ can be written as
    \begin{equation*}
        c(s) = c(1)\,s + c(0)\,(1-s) = q\,c(1)\cdot\frac sq + c(0)\,(1-s),
    \end{equation*}
    so that
    \begin{equation}\label{eq:c-decomp}
        \hat\tau_c = q\,c(1)\,\hat\tau_{EARL} + c(0)\,\hat\tau_U
    \end{equation}
    with $\hat\tau_U$ as defined in the proposition.

    We first show that
    \begin{equation}\label{eq:U-orth}
        \E[\hat\tau_U] = 0\quad\textup{and}\quad \E[\hat\tau_U\,\hat\tau_{EARL}] = 0
    \end{equation}
    for every outcome model satisfying the stated conditions. Fix a randomisation unit $r$ and let $\mathbf Z_{-r}$ collect all assignment coordinates except $Z_r$. Under Assumption~A\ref{a:1}, $Z_r$ is independent of $(\mathbf S,\,\mathbf Z_{-r})$, so for every function $f$ with $\E[|f(\mathbf S,\,\mathbf Z_{-r})|]<\infty$,
    \begin{eqnarray}\label{eq:kill-term}
        \E\left[(1-S_r)\tilde Z_r\, f(\mathbf S,\,\mathbf Z_{-r})\right] &=& \E[\tilde Z_r]\cdot\E\left[(1-S_r)\, f(\mathbf S,\,\mathbf Z_{-r})\right]\nn\\
        &=& 0.
    \end{eqnarray}
    We bring every summand of $\E[\hat\tau_U]$ and $\E[\hat\tau_U\hat\tau_{EARL}]$ into this form by replacing the outcomes with their conditional moments given $(\mathbf S,\,\mathbf Z)$ and then removing $Z_r$ from those moments. The removal step relies on the following observation: by Assumption~A\ref{a:mu}(ii), $e_a(\mathbf s,\,\mathbf z)$ does not depend on $z_r$ when $s_r=0$, so there exists an integrable function $h_{a,r}$ such that
    \begin{equation}\label{eq:h-subst}
        (1-S_r)\,e_a(\mathbf S,\,\mathbf Z) = (1-S_r)\,h_{a,r}(\mathbf S,\,\mathbf Z_{-r})\quad\textup{a.s.};
    \end{equation}
    analogously, the second-moment condition of the proposition provides an integrable function $g_{a,r}$ with $(1-S_r)\,m_a(\mathbf S,\,\mathbf Z) = (1-S_r)\,g_{a,r}(\mathbf S,\,\mathbf Z_{-r})$ a.s. In the displays below we suppress the arguments of $e_a$, $m_a$, $h_{a,r}$, and $g_{a,r}$.

    A summand of $\E[\hat\tau_U]$ has the form $\E[(1-S_r)\tilde Z_rY_a]$ with $r\in\mathcal D(a)$. Since $(1-S_r)\tilde Z_r$ is a function of $(\mathbf S,\,\mathbf Z)$, conditioning on $(\mathbf S,\,\mathbf Z)$ gives
    \begin{eqnarray*}
        \E[(1-S_r)\tilde Z_rY_a] &=& \E[(1-S_r)\tilde Z_r\,e_a]
        \stackrel{\eqref{eq:h-subst}}{=} \E[(1-S_r)\tilde Z_r\,h_{a,r}]\\
        &\stackrel{\eqref{eq:kill-term}}{=}& 0,
    \end{eqnarray*}
    proving $\E[\hat\tau_U]=0$. A summand of $\E[\hat\tau_U\hat\tau_{EARL}]$ has the form $\E[(1-S_r)\tilde Z_r\, (S_{r'}/q)\tilde Z_{r'}\, Y_aY_{a'}]$ with $r\in\mathcal D(a)$ and $r'\in\mathcal D(a')$. If $r'=r$, it vanishes because $(1-S_r)S_r=0$, so let $r'\neq r$; then $(S_{r'}/q)\tilde Z_{r'}$ is a bounded function of $(\mathbf S,\,\mathbf Z_{-r})$. For $a\neq a'$, conditioning on $(\mathbf S,\,\mathbf Z)$ and applying Lemma~\ref{lem:cov=0}(a),
    \begin{eqnarray*}
        \lefteqn{\E\left[(1-S_r)\tilde Z_r\, (S_{r'}/q)\tilde Z_{r'}\, Y_aY_{a'}\right]}\\
        &=& \E\left[(1-S_r)\tilde Z_r\, (S_{r'}/q)\tilde Z_{r'}\, e_a\,e_{a'}\right]\\
        &\stackrel{\eqref{eq:h-subst}}{=}& \E\left[(1-S_r)\tilde Z_r\, (S_{r'}/q)\tilde Z_{r'}\, h_{a,r}\,h_{a',r}\right]
        \stackrel{\eqref{eq:kill-term}}{=} 0,
    \end{eqnarray*}
    where the integrability required by \eqref{eq:kill-term} holds because, by the Cauchy--Schwarz and (conditional) Jensen inequalities, $\E[|e_ae_{a'}|]\le(\E[Y_a^2]\,\E[Y_{a'}^2])^{1/2}<\infty$. For $a=a'$, the same computation goes through with $\E[Y_a^2\mid\mathbf S,\,\mathbf Z]=m_a$ in place of Lemma~\ref{lem:cov=0}(a) and with $g_{a,r}$ in place of $h_{a,r}h_{a',r}$. Summing over all terms proves \eqref{eq:U-orth}.

    Part (a). By \eqref{eq:c-decomp} and \eqref{eq:U-orth}, $\E[\hat\tau_c] = q\,c(1)\,\E[\hat\tau_{EARL}]$ for every admissible outcome model. If $q\,c(1)=1$, the expectations agree always. Conversely, take the outcome model $Y_a = F_a$ (which satisfies all the stated conditions: $F_a$ is a bounded function of $\{(S_r,\,S_rZ_r)\colon r\in\mathcal D(a)\}$, so $\varepsilon_a=0$ and both $e_a=F_a$ and $m_a=F_a^2$ are local and exclude would-be assignments): then $\E[\hat\tau_{EARL}] = N^{-1}\sum_a \mathbf 1\{\mathcal D(a)\neq\emptyset\}\neq 0$ because the graph contains a connection, so $\E[\hat\tau_c]=\E[\hat\tau_{EARL}]$ forces $q\,c(1)=1$. The final claim of (a) follows from Theorem~\ref{th:unbiasedness}(b).

    Part (b). With $q\,c(1)=1$, decomposition \eqref{eq:c-decomp} reads $\hat\tau_c = \hat\tau_{EARL} + c(0)\hat\tau_U$, and by \eqref{eq:U-orth} the two terms are uncorrelated with $\E[\hat\tau_U]=0$. Hence
    \begin{equation*}
        \var(\hat\tau_c) = \var(\hat\tau_{EARL}) + c(0)^2\var(\hat\tau_U). \qedhere
    \end{equation*}
\end{proof}

\subsection{Consistency of the EARL Estimator}

We begin with the following simple lemma.
\begin{lemma}\label{lem:weight-bound}
    Let the random variables $S_r,\,Z_r\colon\Omega\mapsto\{0,\,1\}$, $r=1,\,\ldots,\,M$, satisfy Assumption A\ref{a:1} and let $\phi_a$ be defined as in \eqref{def:phi}. Then, with $K_w$ from \eqref{def:Kw},
    \begin{equation*}
        \left|\phi_a\right| \le K_w\qquad \forall\,a\in\mathcal A.
    \end{equation*}
\end{lemma}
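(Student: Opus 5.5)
The bound follows from the triangle inequality applied term by term to the definition \eqref{def:phi}. Write $\phi_a=\sum_{r\in\mathcal D(a)}(S_r/q)\tilde Z_r$ with $\tilde Z_r=(Z_r-p)/(p(1-p))$ as in \eqref{def:Kw}. By Assumption~A\ref{a:1}, each indicator takes values in $\{0,\,1\}$. Hence $0\le S_r/q\le 1/q$ and $|Z_r-p|\le\max(p,\,1-p)\le 1$, which gives $|\tilde Z_r|\le 1/(p(1-p))$. Each summand is therefore bounded in absolute value by $1/(q\,p(1-p))$.

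Summing over $r\in\mathcal D(a)$ and using $|\mathcal D(a)|\le d_{\mathcal A}$ (by the definition of $d_{\mathcal A}$) gives
\begin{equation*}
|\phi_a|\le \frac{|\mathcal D(a)|}{q\,p(1-p)}\le \frac{d_{\mathcal A}}{q\,p(1-p)}=K_w .
\end{equation*}
For an isolated unit the sum is empty, so $\phi_a=0\le K_w$.

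There is no real obstacle here. The only care needed is noting that the bound holds pointwise, i.e.\ for every realisation $\omega\in\Omega$, and therefore almost surely; no independence is used beyond the indicators being binary. One could sharpen the bound to $\max(p,1-p)\,d_{\mathcal A}/(q\,p(1-p))$, but this is not needed.
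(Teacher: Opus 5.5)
Your proposal is correct and follows essentially the same argument as the paper: bound each summand $(S_r/q)\tilde Z_r$ by $\max(p,\,1-p)/(q\,p(1-p))\le 1/(q\,p(1-p))$ and use that there are at most $d_{\mathcal A}$ summands. Your remarks that the bound holds pointwise and covers isolated units are accurate additions.
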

\begin{proof}
    Each summand of $\phi_a$ satisfies $|(S_r/q)\tilde Z_r| \le \frac{\max(p,\,1-p)}{q\,p(1-p)} \le \frac{1}{q\,p(1-p)}$, and there are at most $d_{\mathcal A}$ summands.
\end{proof}

Similarly to \cite[Lemma~A.3]{harshaw2023erl}, we proceed by establishing a bound on the $k$-th absolute moment of the error $(\hat\gamma_a - \gamma_a)$.
\begin{lemma}\label{lem:kth-abs-moment-bound}
    Let the random variables $S_r,\,Z_r\colon\Omega\mapsto\{0,\,1\}$, $r=1,\,\ldots,\,M$, satisfy Assumption A\ref{a:1}. Let the outcome variables $Y_a$, $a\in\mathcal{A}$, have a finite absolute moment of order $K\ge 1$ and let Assumptions A\ref{a:mu} and A\ref{a:4} be true.

    Then for each $k=1,\,\ldots,\,K$ it holds that
    \begin{equation*}
        \E[|\hat\gamma_a-\gamma_a|^k] \le \left(2K_w\right)^k \E[|Y_a|^k].
    \end{equation*}
\end{lemma}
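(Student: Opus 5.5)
We bound $|\hat\gamma_a-\gamma_a|\le|\hat\gamma_a|+|\gamma_a|$ and control each piece by $K_w$ times a moment of $|Y_a|$. By Lemma~\ref{lem:weight-bound}, $|\hat\gamma_a|=|\phi_a||Y_a|\le K_w|Y_a|$ pointwise.

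For the deterministic term, Remark~\ref{rmk:unbiasedness} (which uses A\ref{a:1}, A\ref{a:mu}, A\ref{a:4}) gives $\gamma_a=\E[\hat\gamma_a]$. Hence
\[
|\gamma_a|\le\E[|\phi_a||Y_a|]\le K_w\,\E[|Y_a|].
\]
Integrability holds because $K\ge1$.

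Then $|\hat\gamma_a-\gamma_a|\le K_w\bigl(|Y_a|+\E[|Y_a|]\bigr)$. Raise this to the power $k$ and use convexity, $(u+v)^k\le 2^{k-1}(u^k+v^k)$ for $u,v\ge0$:
\[
\E[|\hat\gamma_a-\gamma_a|^k]\le K_w^k\,2^{k-1}\bigl(\E[|Y_a|^k]+(\E[|Y_a|])^k\bigr).
\]
By Jensen (or Lyapunov), $(\E|Y_a|)^k\le\E[|Y_a|^k]$, so the right-hand side is at most $K_w^k 2^{k-1}\cdot 2\,\E[|Y_a|^k]=(2K_w)^k\E[|Y_a|^k]$. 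Finiteness of $\E[|Y_a|^k]$ for $k\le K$ follows from the finite $K$-th absolute moment.

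There is essentially no obstacle. The only point needing care is that $\gamma_a$ equals $\E[\hat\gamma_a]$. This is exactly where A\ref{a:mu} and A\ref{a:4} are used, via Remark~\ref{rmk:unbiasedness}, since without linearity $\gamma_a$ would not be bounded by the weight bound.

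For an isolated unit, $\phi_a=0$ and $\gamma_a=B$-free zero by the same remark. The bound is therefore trivial there.
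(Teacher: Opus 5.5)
Your proof is correct and follows the paper's argument. Like the paper, you use the pointwise weight bound of Lemma~\ref{lem:weight-bound}, the identity $\gamma_a=\E[\hat\gamma_a]$ from Remark~\ref{rmk:unbiasedness}, and Jensen to pass from $\E[|Y_a|]$ to $\E[|Y_a|^k]$; the only difference is that you combine the two pieces with the elementary convexity inequality $(u+v)^k\le 2^{k-1}(u^k+v^k)$ instead of Minkowski's inequality in $L^k$, and both give the same constant $(2K_w)^k$.
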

\begin{proof}
    Using Lemma~\ref{lem:weight-bound}, we obtain that for any $k=1,\,\ldots,\,K$,
    \begin{eqnarray}\label{eq:E-hat-ga-k-bound}
        \E[|\hat\gamma_a|^k] &=& \E\left[\left|\phi_a\right|^k|Y_a|^k\right]
        \stackrel{\textup{Lemma}~\ref{lem:weight-bound}}{\le} K_w^k\,\E[|Y_a|^k].
    \end{eqnarray}
    From the last inequality and Remark~\ref{rmk:unbiasedness} it follows that
    \begin{eqnarray}\label{eq:ga-bound}
        |\gamma_a| = |\E[\hat\gamma_a]| \le \E[|\hat\gamma_a|] \le K_w\,\E[|Y_a|] \le K_w\left(\E[|Y_a|^k]\right)^{1/k},
    \end{eqnarray}
    where the last step is Jensen's inequality. Combining \eqref{eq:E-hat-ga-k-bound} and \eqref{eq:ga-bound}, and using Minkowski's inequality we derive that
    \begin{eqnarray*}
        \E[|\hat\gamma_a-\gamma_a|^k] &\le& \left(\left(\E[|\hat\gamma_a|^k]\right)^{1/k} + |\gamma_a|\right)^{k}\\
        &\le& \left(K_w\left(\E[|Y_a|^k]\right)^{1/k}+K_w\left(\E[|Y_a|^k]\right)^{1/k}\right)^k\\
        &=& \left(2K_w\right)^k\E[|Y_a|^k]
    \end{eqnarray*}
    for all $k=1,\,\ldots,\,K$.
\end{proof}

\begin{proof}[Proof of Theorem~\ref{th:consistency}]
    The proof is analogous to \cite[Theorem~4.2]{harshaw2023erl}. Consider the mean squared error of $\hat\tau_{EARL}$,
    \begin{eqnarray}\label{eq:mse}
        \E\left[(\hat\tau^{\phantom{1}}_{EARL} - GATE)^2\right] = \frac 1 {N^2} \sum_{a\in\mathcal{A}}\sum_{a'\in\mathcal{A}}\E[(\hat\gamma_a-\gamma_a)(\hat\gamma_{a'}-\gamma_{a'})].
    \end{eqnarray}
    A given analysis unit $a$ has at most $d_{\mathcal A}d_{\mathcal R}$ analysis units with a shared connected randomisation unit in the experiment graph. Therefore, Lemma~\ref{lem:cov=0}(b) implies that the inner sum on the right-hand side of \eqref{eq:mse} has at most $d_{\mathcal A}d_{\mathcal R}$ non-zero terms for each $a$. At the same time, using the Cauchy--Schwarz inequality and Lemma~\ref{lem:kth-abs-moment-bound} ($k=2$) we can bound every such term as follows,
\begin{eqnarray*}
    \left|\E[(\hat\gamma_{a}-\gamma_{a})(\hat\gamma_{a'}-\gamma_{a'})]\right| &\le&
    \sqrt{\E[(\hat\gamma_{a}-\gamma_{a})^2]\,\E[(\hat\gamma_{a'}-\gamma_{a'})^2]} \nonumber\\
    &\le& \sqrt{\E[Y_a^2]\,\E[Y_{a'}^2]}\left(2K_w\right)^2
    \le 4L\,K_w^2
\end{eqnarray*}
for any $a,\,a'\in\mathcal{A}$.
It then follows that
\begin{eqnarray}\label{eq:O}
    \frac 1 {N^2} \sum_{a\in\mathcal{A}}\sum_{a'\in\mathcal{A}}\E[(\hat\gamma_a-\gamma_a)(\hat\gamma_{a'}-\gamma_{a'})] &\le& \frac{d_{\mathcal A}d_{\mathcal R}}{N}\, 4L\,K_w^2 \nonumber\\ &=& O(d_{\mathcal R}^{\phantom{1}}d_{\mathcal A}^3/N).
\end{eqnarray}
Together with \eqref{eq:mse} and the condition $d_{\mathcal R}^{\phantom{1}}d_{\mathcal A}^3=o(N)$, the established relation \eqref{eq:O} implies that
$$
\E\left[(\hat\tau^{\phantom{1}}_{EARL} - GATE)^2\right]\to 0\qquad \textup{as }N\to\infty,
$$
and the consistency of $\hat\tau^{\phantom{1}}_{EARL}$ for GATE follows.
\end{proof}

\subsection{Asymptotic Normality of the EARL Estimator}

\begin{proof}[Proof of Theorem~\ref{th:asymptotic-normality}]
    Let $F_N$ be the distribution function of the statistic
    $$
        t_N = {(\hat\tau^{\phantom{1}}_{EARL} - GATE)}/\sqrt{V_N},
    $$
    where $V_N$ is a shorthand for $\var(\hat\tau^{\phantom{1}}_{EARL})$,
    and let $W$ denote the Wasserstein distance.
    Similarly to \cite[Theorem~4.3]{harshaw2023erl}, we consider the variables $X_a = (\hat\gamma_a-\gamma_a)$. By Remark~\ref{rmk:unbiasedness}, we have that $\E[X_a] = 0$ for all $a\in\mathcal{A}$. Moreover, Lemma~\ref{lem:kth-abs-moment-bound} implies that the $X_a$, $a\in\mathcal{A}$, have finite fourth moments, $\E[X_a^4]\le (2K_w)^4L$, and $\E[|X_a|^3]\le (2K_w)^3\E[|Y_a|^3] \le (2K_w)^3L^{3/4}$ (using $(\E[|Y_a|^3])^{1/3}\le (\E[Y_a^4])^{1/4}$). It also holds for $\sigma^2 := \var(\sum_{a\in\mathcal{A}}X_a)$ that $\sigma^2 = N^2\var(\hat\tau^{\phantom{1}}_{EARL})=N^2V_N$. Finally,
    under Assumptions A\ref{a:mu} and A\ref{a:no-autocorrelation}$^*$ we have that
    \begin{equation*}
        X_a \ind \{X_{a'}\mid a'\in\mathcal A\colon \mathcal D(a')\cap \mathcal D(a)=\emptyset\}\qquad\forall\,a\in\mathcal A,
    \end{equation*}
    and therefore for every $a\in\mathcal A$ the variable $X_a$ has a dependency neighbourhood (in the sense of \cite[Definition~3.1]{Ro:11}) with at most $D:=d_{\mathcal R}d_{\mathcal A}$ elements. Applying \cite[Theorem~3.5]{Ro:11}, we obtain
    \begin{equation*}
        W(F_N,\,\mathcal{N}(0,\,1)) \le \frac{D^2}{\sigma^3}\sum_{a\in\mathcal A}\E[|X_a|^3] + \frac{\sqrt{26}D^{3/2}}{\sqrt{\pi}\,\sigma^2}\sqrt{\sum_{a\in\mathcal A}\E[
        X_a^4]}.
    \end{equation*}
    Now using the moment bounds above and Assumption~A\ref{a:non-degeneracy} (which gives $1/V_N \le N/\delta$ for all sufficiently large $N$), we further derive
    \begin{eqnarray*}
        W(F_N,\,\mathcal{N}(0,\,1)) &\le&
        \frac{D^2(2K_w)^3L^{3/4}}{N^2V_N^{3/2}}
        + \sqrt{\frac{26}{\pi}}\cdot\frac{D^{3/2}(2K_w)^2\sqrt{L}}{N^{3/2}V_N}\\
        &\le&\frac{8\,D^2K_w^3L^{3/4}}{\delta^{3/2}\sqrt N}
        + \sqrt{\frac{26}{\pi}}\cdot\frac{4\,D^{3/2}K_w^2\sqrt{L}}{\delta\sqrt N}
        \\
        &=& O\left(\frac{d^2_{\mathcal R}d^5_{\mathcal A}}{\sqrt{N}} + \frac{d_{\mathcal R}^{3/2}d_{\mathcal A}^{7/2}}{\sqrt{N}}\right)
        \;=\; O\left(\frac{d^2_{\mathcal R}d^5_{\mathcal A}}{\sqrt{N}}\right)\!.
    \end{eqnarray*}
    Therefore, if $d^{4}_{\mathcal{R}}d^{10}_{\mathcal{A}}=o(N)$, we have that
    \begin{equation*}
    W(F_N,\,\mathcal{N}(0,\,1)) \to 0\qquad \textup{as }N\to\infty,
    \end{equation*}
    and $\hat\tau^{\phantom{1}}_{EARL}$ is asymptotically normal.
\end{proof}

\subsection{A Concentration Lemma for Sparse Locally Dependent Sums}

The proofs of Theorems~\ref{th:asymptotic-variance} and~\ref{th:randomisation-inference} rest on the following elementary concentration bound.

\begin{lemma}\label{lem:sparse}
    Let $\mathcal Q$ be a finite index set and $\{\eta_\pi\}_{\pi\in\mathcal Q}$ a collection of square-integrable random variables with $\E[\eta_\pi^2]\le B$ for all $\pi\in\mathcal Q$. Suppose that for every $\pi\in\mathcal Q$,
    \begin{equation*}
        \#\{\rho\in\mathcal Q\colon \cov(\eta_\pi,\,\eta_\rho)\neq 0\} \le \kappa.
    \end{equation*}
    Then, for any $n\ge 1$,
    \begin{equation*}
        \frac 1{n^2}\sum_{\pi\in\mathcal Q}(\eta_\pi - \E[\eta_\pi]) = O_p\left(\frac{\sqrt{|\mathcal Q|\,\kappa\, B}}{n^2}\right).
    \end{equation*}
\end{lemma}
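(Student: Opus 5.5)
The plan is a second-moment bound followed by Chebyshev's inequality. Set $X = \sum_{\pi\in\mathcal Q}(\eta_\pi - \E[\eta_\pi])$. This sum has finitely many terms, each square integrable, so $X$ has mean zero and finite variance. It suffices to show
\begin{equation*}
\var(X) \le |\mathcal Q|\,\kappa\, B .
\end{equation*}
Once we have this, Chebyshev's inequality gives, for every $t>0$,
\begin{equation*}
\Pr\left\{|X|/n^2 > t\sqrt{|\mathcal Q|\kappa B}/n^2\right\} \le 1/t^2 ,
\end{equation*}
and the bound is uniform in $n$ and in the index set. That is exactly the $O_p$ statement: the normalised quantity is tight along any sequence of $(\mathcal Q, \kappa, B, n)$.

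For the variance bound, expand bilinearly:
\begin{equation*}
\var(X) = \sum_{\pi\in\mathcal Q}\sum_{\rho\in\mathcal Q}\cov(\eta_\pi,\eta_\rho).
\end{equation*}
For each fixed $\pi$, the hypothesis says at most $\kappa$ of the inner terms are nonzero. Each term satisfies
\begin{equation*}
|\cov(\eta_\pi,\eta_\rho)| \le \sqrt{\var(\eta_\pi)\var(\eta_\rho)} \le \sqrt{\E[\eta_\pi^2]\E[\eta_\rho^2]} \le B
\end{equation*}
by Cauchy--Schwarz. So the inner sum is at most $\kappa B$ in absolute value, and summing over the $|\mathcal Q|$ choices of $\pi$ gives the claimed bound.

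There is no real obstacle; the argument is elementary. The only point to handle carefully is reading the $O_p$ notation correctly in the triangular-array setting the paper uses later. There $\mathcal Q$, $\kappa$, $B$ and $n$ all vary with $N$, and the statement should be understood as tightness of $n^2\bigl(\sqrt{|\mathcal Q|\kappa B}\bigr)^{-1}\cdot n^{-2}X$. The Chebyshev bound above is uniform, so it delivers this directly.

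One degenerate case needs a remark: if $|\mathcal Q|\kappa B = 0$, then $\var(X)=0$ and $X=0$ almost surely, so the claim holds trivially.
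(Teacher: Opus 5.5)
Your proposal is correct and matches the paper's proof: bound each non-zero covariance by $B$ via Cauchy--Schwarz, count at most $\kappa$ non-zero terms per $\pi$ to get $\var(X)\le|\mathcal Q|\,\kappa\,B$, and finish with Chebyshev's inequality. Your remarks on the uniformity of the Chebyshev bound along triangular arrays, and on the degenerate case $|\mathcal Q|\,\kappa\,B=0$, are correct refinements that the paper leaves implicit.
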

\begin{proof}
    By the Cauchy--Schwarz inequality, every non-zero covariance satisfies $|{\cov(\eta_\pi,\eta_\rho)}|\le \sqrt{\var(\eta_\pi)\var(\eta_\rho)}\le B$. Hence
    \begin{equation*}
        \var\left(\frac 1{n^2}\sum_{\pi\in\mathcal Q}\eta_\pi\right) = \frac 1{n^4}\sum_{\pi\in\mathcal Q}\sum_{\rho\in\mathcal Q}\cov(\eta_\pi,\,\eta_\rho) \le \frac{|\mathcal Q|\,\kappa\,B}{n^4},
    \end{equation*}
    and the claim follows from Chebyshev's inequality.
\end{proof}

\subsection{Unbiased Variance Estimation}\label{app:variance}

Throughout this subsection, Assumption~A\ref{a:4}$^*$ is in force; recall that it implies Assumptions~A\ref{a:mu} and~A\ref{a:4} and that, with the convention $B_1^a=B_2^a=0$ for isolated units, $\mu_a(x,\,y) = B_0^a+B_1^ax+B_2^axy$ and $\gamma_a = B_2^a$. Isolated units satisfy $\phi_a=G_a=F_a=0$, so $\hat\gamma_a = 0 = \gamma_a$ and their diagonal pairs lie outside $\mathcal P$; the identities below therefore concern units with $\mathcal D(a)\neq\emptyset$. We will use the following elementary moment identities, valid under Assumption~A\ref{a:1} for every non-isolated $a$ (all sums over $r,\,r'\in\mathcal D(a)$, cross terms vanish because $\E[\tilde Z_r]=0$ and the indicators are independent):
\begin{equation}\label{eq:phi-moments}
    \E[\phi_a] = 0,\qquad \E[\phi_a G_a] = 0,\qquad \E[\phi_a F_a] = 1,
\end{equation}
where the last identity follows from
$\E[(S_r/q)\tilde Z_r\,S_{r'}Z_{r'}] = \mathbf 1\{r=r'\}\E[S_r/q\cdot S_r]\E[\tilde Z_rZ_r] = \mathbf 1\{r=r'\}$, summed over $r'$ and divided by $|\mathcal D(a)|$. In particular, under Assumption~A\ref{a:4}$^*$,
\begin{equation}\label{eq:gamma-unbiased-astar}
    \E[\hat\gamma_a] = \E[\phi_a e_a(\mathbf S,\,\mathbf Z)] = B_1^a\E[\phi_aG_a] + B_2^a\E[\phi_aF_a] = B_2^a = \gamma_a.
\end{equation}
Analogously, for any two analysis units,
\begin{equation}\label{eq:A12}
    \E[\phi_{a_1}\phi_{a_2}] = \frac{|\mathcal D(a_1)\cap\mathcal D(a_2)|}{q\,p(1-p)},
\end{equation}
which vanishes for non-overlapping pairs.

\begin{proof}[Proof of Theorem~\ref{th:asymptotic-variance}]
    \emph{Step 1: unbiasedness.} Fix a pair $(a_1,\,a_2)\in\mathcal P$ with $a_1\neq a_2$ and write $S_{a_1,a_2} := \mathbf c_{a_1,a_2}^{\top}(\mathbf m_{a_1,a_2}-\E[\mathbf m_{a_1,a_2}])$ for the subtracted term in \eqref{def:R}. By construction, $\mathbf c_{a_1,a_2}$ solves the moment equations
    \begin{equation}\label{eq:moment-system}
        \E\left[S_{a_1,a_2}\, f_k(a_1)f_l(a_2)\right] = \mathbf 1\{k=l=2\},\qquad k,\,l\in\{0,\,1,\,2\},
    \end{equation}
    where $f_0(a)=1$, $f_1(a)=G_a$, $f_2(a)=F_a$. Indeed, for $(k,\,l)\neq(0,\,0)$ the product $f_k(a_1)f_l(a_2)$ is a component of $\mathbf m_{a_1,a_2}$, and since $S_{a_1,a_2}$ is centred,
    \begin{eqnarray*}
        \E\left[S_{a_1,a_2}\, f_k(a_1)f_l(a_2)\right] &=& \cov\left(S_{a_1,a_2},\,f_k(a_1)f_l(a_2)\right)\\
        &=& \left(\Sigma_{a_1,a_2}\mathbf c_{a_1,a_2}\right)_{f_kf_l},
    \end{eqnarray*}
    so the eight equations \eqref{eq:moment-system} with $(k,\,l)\neq(0,\,0)$ are coordinate-wise identical to the linear system $\Sigma_{a_1,a_2}\mathbf c = \mathbf u$ that defines $\mathbf c_{a_1,a_2}$ and hence hold by construction; for $(k,\,l)=(0,\,0)$ the equation reads $\E[S_{a_1,a_2}]=0$, which holds because $S_{a_1,a_2}$ is centred.

    Under Assumptions~A\ref{a:no-autocorrelation}$^*$ and A\ref{a:4}$^*$, conditioning on $(\mathbf S,\,\mathbf Z)$ and using Lemma~\ref{lem:cov=0}(a),
    \begin{eqnarray*}
        \E[Y_{a_1}Y_{a_2}S_{a_1,a_2}] &=& \E[e_{a_1}(\mathbf S,\,\mathbf Z)\,e_{a_2}(\mathbf S,\,\mathbf Z)\,S_{a_1,a_2}]\\
        &=& \sum_{k,\,l\in\{0,1,2\}}B^{a_1}_kB^{a_2}_l\,\E\left[S_{a_1,a_2}f_k(a_1)f_l(a_2)\right]\\
        &\stackrel{\eqref{eq:moment-system}}{=}& B_2^{a_1}B_2^{a_2} \;\stackrel{\eqref{eq:gamma-unbiased-astar}}{=}\; \E[\hat\gamma_{a_1}]\,\E[\hat\gamma_{a_2}].
    \end{eqnarray*}
    Since $Y_{a_1}Y_{a_2}\phi_{a_1}\phi_{a_2} = \hat\gamma_{a_1}\hat\gamma_{a_2}$ identically, we conclude
    \begin{equation}\label{eq:pair-unbiased}
        \E[Y_{a_1}Y_{a_2}R_{a_1,a_2}] = \E[\hat\gamma_{a_1}\hat\gamma_{a_2}] - \E[\hat\gamma_{a_1}]\E[\hat\gamma_{a_2}] = \cov(\hat\gamma_{a_1},\hat\gamma_{a_2}).
    \end{equation}
    For the diagonal terms, the same argument applies with the monomial set $(G_a,\,F_a,\,G_a^2,\,G_aF_a,\,F_a^2)$: the moment equations make $\E[e_a^2\,S_{a,a}] = (B_2^a)^2$, while $\E[\varepsilon_a^2 S_{a,a}]$ vanishes by Assumption~A\ref{a:no-autocorrelation}$^*$ and centredness of $S_{a,a}$, so, using $\E[Y_a^2\mid\mathbf S,\,\mathbf Z] = e_a^2 + \E[\varepsilon_a^2]$,
    \begin{equation*}
        \E[Y_a^2R_{a,a}] = \E[\hat\gamma_a^2] - \gamma_a^2 = \var(\hat\gamma_a).
    \end{equation*}
    Finally, for $(a_1,\,a_2)\notin\mathcal P$ we have $R_{a_1,a_2}=0$ and $\cov(\hat\gamma_{a_1},\hat\gamma_{a_2})=0$ by Lemma~\ref{lem:cov=0}(b). Summing over all pairs,
    \begin{equation*}
        \E[\hat V] = \frac 1{N^2}\sum_{a_1}\sum_{a_2}\cov(\hat\gamma_{a_1},\,\hat\gamma_{a_2}) = \var(\hat\tau^{\phantom{1}}_{EARL})
    \end{equation*}
    for every $N$, which is the first claim.

    \emph{Step 2: concentration.} We apply Lemma~\ref{lem:sparse} to the collection $\eta_\pi = Y_{a_1}Y_{a_2}R_{a_1,a_2}$, $\pi=(a_1,\,a_2)\in\mathcal P$, with $n=N$. First, the weights are uniformly bounded: $|\phi_{a_1}\phi_{a_2}|\le K_w^2$ by Lemma~\ref{lem:weight-bound}; all monomials in $\mathbf m_{a_1,a_2}$ take values in $[0,\,1]$, so $\|\mathbf m - \E[\mathbf m]\|_2\le \sqrt 8$, while $\|\mathbf c_{a_1,a_2}\|_2 \le \lambda_N^{-1}\|\mathbf u\|_2=\lambda_N^{-1}$ by Assumption~A\ref{a:lambda0}; hence
    \begin{equation}\label{eq:CR}
        |R_{a_1,a_2}| \le K_w^2 + \sqrt 8/\lambda_N =: C_R.
    \end{equation}
    Consequently, using the Cauchy--Schwarz inequality and $\E[Y_a^4]\le L$,
    \begin{equation*}
        \E[\eta_\pi^2] \le C_R^2\,\sqrt{\E[Y_{a_1}^4]\,\E[Y_{a_2}^4]} \le C_R^2L =: B.
    \end{equation*}
    Next, note that $|\mathcal P| \le Nd_{\mathcal A}d_{\mathcal R}$ (because $|\mathcal T(a)|\le d_{\mathcal A}d_{\mathcal R}$ for any $a\in\mathcal A$). Under Assumptions~A\ref{a:mu} and A\ref{a:no-autocorrelation}$^*$, each $\eta_{(a_1,a_2)}$ is a function of the indicators $\{(S_r,\,Z_r)\colon r\in\mathcal D(a_1)\cup\mathcal D(a_2)\}$ and the errors $(\varepsilon_{a_1},\,\varepsilon_{a_2})$; therefore $\eta_\pi$ and $\eta_\rho$ are independent (hence uncorrelated) unless the two pairs share an analysis unit or a connected randomisation unit. For a fixed $\pi=(a_1,\,a_2)$, any such $\rho=(a_3,\,a_4)$ must have $a_3$ or $a_4$ in $\mathcal T(a_1)\cup\mathcal T(a_2)\cup\{a_1,\,a_2\}$, a set of at most $2(d_{\mathcal A}d_{\mathcal R}+1)$ elements, and its other coordinate ranges over at most $d_{\mathcal A}d_{\mathcal R}$ units (to stay within $\mathcal P$). Hence
    \begin{equation*}
        \kappa \le 2\cdot 2(d_{\mathcal A}d_{\mathcal R}+1)d_{\mathcal A}d_{\mathcal R}\le 8\,d^2_{\mathcal A}d^2_{\mathcal R}.
    \end{equation*}
    Lemma~\ref{lem:sparse} now gives
    \begin{eqnarray*}
        \hat V - \var(\hat\tau^{\phantom{1}}_{EARL}) &=& O_p\left(\frac{\sqrt{Nd_{\mathcal A}d_{\mathcal R}\cdot 8d^2_{\mathcal A}d^2_{\mathcal R}\cdot C_R^2L}}{N^2}\right)\\ &=& O_p\left(\frac{(d_{\mathcal A}d_{\mathcal R})^{3/2}C_R}{N^{3/2}}\right).
    \end{eqnarray*}
    Dividing by $\var(\hat\tau_{EARL})\ge\delta/N$ (Assumption~A\ref{a:non-degeneracy}),
    \begin{equation*}
        \frac{\hat V}{\var(\hat\tau^{\phantom{1}}_{EARL})} - 1 = O_p\left(\frac{(d_{\mathcal A}d_{\mathcal R})^{3/2}\,C_R}{\sqrt N}\right) = o_p(1),
    \end{equation*}
    where, by \eqref{eq:CR}, the bound splits into a term of order $d^{7/2}_{\mathcal A}d^{3/2}_{\mathcal R}/\sqrt N$, which vanishes because $d^{7}_{\mathcal A}d^{3}_{\mathcal R}\le d^{10}_{\mathcal A}d^{4}_{\mathcal R} = o(N)$, and a term of order $(d_{\mathcal A}d_{\mathcal R})^{3/2}/(\lambda_N\sqrt N)$, which vanishes by the rate condition of Assumption~A\ref{a:lambda0}.

    \emph{Step 3: asymptotic validity.} Assumption~A\ref{a:4}$^*$ implies Assumptions~A\ref{a:mu} and A\ref{a:4}, so all conditions of Theorem~\ref{th:asymptotic-normality} hold and $(\hat\tau_{EARL}-GATE)/\sqrt{V_N}\stackrel{d}{\to}\mathcal N(0,\,1)$. Writing
    \begin{equation*}
        \frac{\hat\tau^{\phantom{1}}_{EARL} - GATE}{\sqrt{\hat V}} = \frac{\hat\tau^{\phantom{1}}_{EARL} - GATE}{\sqrt{V_N}}\cdot\sqrt{\frac{V_N}{\hat V}}
    \end{equation*}
    and applying Slutsky's theorem with $\hat V/V_N\stackrel{p}{\to}1$ completes the proof.
\end{proof}

\subsection{Randomisation Inference}\label{app:ri}

\begin{proof}[Proof of Theorem~\ref{th:randomisation-inference}]
    Under the strong null \eqref{def:strong-null} write $e_a$ for the constant value of $e_a(\cdot,\,\cdot)$, so that $Y_a = e_a + \varepsilon_a$ with, by Assumption~A\ref{a:no-autocorrelation}$^*$, the errors jointly independent of each other and of $(\mathbf S,\,\mathbf Z)$. We will repeatedly use the bound $\E[|Y_{a}Y_{b}Y_{c}Y_{d}|]\le L$, valid for any four (not necessarily distinct) analysis units by the generalised H\"older inequality and $\E[Y_a^4]\le L$; in particular, $\E[Y_{a_1}^2Y_{a_2}^2]\le L$. Denote $A_{a_1,a_2} := \E[\phi_{a_1}\phi_{a_2}]$, which by \eqref{eq:A12} vanishes unless $(a_1,\,a_2)\in\mathcal P$ and satisfies $|A_{a_1,a_2}|\le d_{\mathcal A}/(q\,p(1-p)) = K_w$ (cf.~\eqref{def:Kw}).

    \emph{Step 1: the target variance.} Under the null, $\E[\phi_aY_a] = e_a\E[\phi_a] = 0$, and for any $a_1,\,a_2$ (including $a_1=a_2$), using the independence of the errors from $(\mathbf S,\,\mathbf Z)$,
    \begin{eqnarray*}
        \E[\phi_{a_1}\phi_{a_2}Y_{a_1}Y_{a_2}] &=& \E[\phi_{a_1}\phi_{a_2}]\,\E[Y_{a_1}Y_{a_2}].
    \end{eqnarray*}
    Consequently,
    \begin{equation}\label{eq:var-null}
        \var(\hat\tau^{\phantom{1}}_{EARL}) = \frac 1{N^2}\sum_{a_1}\sum_{a_2}A_{a_1,a_2}\,\E[Y_{a_1}Y_{a_2}] =: \tilde V.
    \end{equation}

    \emph{Step 2: the infinite-$J$ limit of $\hat V_{RI}$.} Conditionally on the experimental data $\mathcal G$, by which we mean the observed outcomes $\{Y_a\}_{a\in\mathcal A}$ together with $(\mathbf S,\,\mathbf Z)$, the copies $W_j := \hat\tau_{EARL}(\tilde{\mathbf S}_j,\,\tilde{\mathbf Z}_j)$, $j=1,\ldots,J$, are independent and identically distributed with
    \begin{equation}\label{eq:vD}
        \E[W_j\mid \mathcal G] = 0,\qquad \var(W_j\mid\mathcal G) = \frac 1{N^2}\sum_{a_1}\sum_{a_2}A_{a_1,a_2}Y_{a_1}Y_{a_2} =: v.
    \end{equation}
    We claim $v/\tilde V\stackrel{p}{\to}1$. Indeed, $\E[v]=\tilde V$, and
    \begin{equation*}
        v - \tilde V = \frac 1{N^2}\sum_{(a_1,a_2)\in\mathcal P}A_{a_1,a_2}\left(Y_{a_1}Y_{a_2} - \E[Y_{a_1}Y_{a_2}]\right).
    \end{equation*}
    Under the null, $Y_{a_1}Y_{a_2}$ is a function of $(\varepsilon_{a_1},\,\varepsilon_{a_2})$ alone; by the joint independence of the errors, two terms $\eta_{(a_1,a_2)} = A_{a_1,a_2}Y_{a_1}Y_{a_2}$ and $\eta_{(a_3,a_4)}$ are independent unless $\{a_1,\,a_2\}\cap\{a_3,\,a_4\}\neq\emptyset$, which for a fixed pair leaves at most $\kappa = 4d_{\mathcal A}d_{\mathcal R}$ partners in $\mathcal P$: the shared unit can occupy either coordinate of the partner pair and equal $a_1$ or $a_2$, while the remaining coordinate must lie in the shared unit's $\mathcal T$-set for the partner to belong to $\mathcal P$. With $\E[\eta_\pi^2]\le K_w^2\,\E[Y_{a_1}^2Y_{a_2}^2]\le K_w^2L$ and $|\mathcal P|\le Nd_{\mathcal A}d_{\mathcal R}$ (as in the proof of Theorem~\ref{th:asymptotic-variance}), Lemma~\ref{lem:sparse} yields
    \begin{equation*}
        v - \tilde V = O_p\left(\frac{\sqrt{4N(d_{\mathcal A}d_{\mathcal R})^2K_w^2L}}{N^2}\right) = O_p\left(\frac{d_{\mathcal A}d_{\mathcal R}K_w\sqrt L}{N^{3/2}}\right),
    \end{equation*}
    and dividing by $\tilde V = \var(\hat\tau_{EARL})\ge \delta/N$ (Assumption~A\ref{a:non-degeneracy}) gives
    \begin{equation}\label{eq:v-ratio}
        \frac{v}{\tilde V} - 1 = O_p\left(\frac{d^2_{\mathcal A}d_{\mathcal R}}{\sqrt N}\right)\cdot\frac{\sqrt L}{\delta\, q\,p(1-p)} = o_p(1),
    \end{equation}
    since $d^4_{\mathcal A}d^2_{\mathcal R}\le d^{10}_{\mathcal A}d^4_{\mathcal R} = o(N)$.

    \emph{Step 3: finite $J$.} We show $\hat V_{RI}/v\stackrel{p}{\to}1$. Write $\hat V_{RI} = T - \bar W^2$ with $T := J^{-1}\sum_j W_j^2$ and $\bar W := J^{-1}\sum_jW_j$. Conditionally on $\mathcal G$, the variables $W_j^2$ are i.i.d.\ with mean $v$, so $\E[T\mid\mathcal G] = v$ and
    \begin{equation}\label{eq:cond-mc}
        \var\left(T\mid \mathcal G\right) \le \frac{\E[W_1^4\mid\mathcal G]}{J},\qquad \E[\bar W^2\mid \mathcal G] = \frac vJ.
    \end{equation}
    It remains to bound the fourth moment. Let $V_a := \phi_a(\tilde{\mathbf S}_1,\,\tilde{\mathbf Z}_1)Y_a$, so that $W_1 = N^{-1}\sum_aV_a$ and $\E[V_a\mid\mathcal G]=0$. Under Assumption~A\ref{a:1} and the re-randomisation scheme \eqref{def:S-Z-samples}, $\phi_a(\tilde{\mathbf S}_1,\,\tilde{\mathbf Z}_1)$ is independent of $\mathcal G$ and of $\{\phi_{a'}(\tilde{\mathbf S}_1,\,\tilde{\mathbf Z}_1)\colon a'\notin\mathcal T(a)\}$ jointly, and $|\phi_a(\tilde{\mathbf S}_1,\,\tilde{\mathbf Z}_1)|\le K_w$ by Lemma~\ref{lem:weight-bound}. Expand $\E[(\sum_aV_a)^4\mid\mathcal G]$ into the sum of $\E[V_aV_bV_cV_d\mid\mathcal G]$ over all quadruples of indices; any term in which one index lies outside the $\mathcal T$-neighbourhoods of the other three vanishes (its weight factor is independent of the rest and centred, while the outcomes are $\mathcal G$-measurable), and every remaining term is bounded by $K_w^4\,|Y_aY_bY_cY_d|$. The number of remaining terms is at most $3N^2\Delta^2 + O(N\Delta^3) = O(N^2\Delta^2)$ with $\Delta := d_{\mathcal A}d_{\mathcal R}$ (choose an index and a neighbour of it, then another index and a neighbour of it, for terms splitting into two dependent pairs; the remaining connected configurations number at most $cN\Delta^3$ for some absolute constant $c$ independent of $N$); the final step uses $\Delta\le N$, which holds for all sufficiently large $N$ because of the sparsity condition. Taking expectations and using $\E[|Y_aY_bY_cY_d|]\le L$,
    \begin{equation*}
        \E[W_1^4] = \E\bigl[\E[W_1^4\mid \mathcal G]\bigr] = O\left(\frac{N^2\Delta^2 K_w^4L}{N^4}\right) = O\left(\frac{\Delta^2K_w^4L}{N^2}\right).
    \end{equation*}
    Since $\E[T-v\mid\mathcal G]=0$, the tower property and \eqref{eq:cond-mc} give $\E[(T-v)^2] = \E[\var(T\mid\mathcal G)]\le \E[W_1^4]/J$ and $\E[\bar W^2] = \E[v]/J = \tilde V/J$. Applying Markov's inequality to $(T-v)^2$ and to $\bar W^2$ then yields
    \begin{equation*}
        T - v = O_p\left(\frac{\Delta K_w^2\sqrt L}{N\sqrt J}\right),\qquad
        \bar W^2 = O_p\left(\frac{\tilde V}{J}\right).
    \end{equation*}
    Dividing by $v = \tilde V(1+o_p(1)) \ge (\delta/N)(1+o_p(1))$, the second term is $O_p(1/J) = o_p(1)$ and the first is
    \begin{equation*}
        O_p\left(\frac{\Delta K_w^2\sqrt L }{\delta\sqrt J}\right) = O_p\left(\frac{d^3_{\mathcal A}d_{\mathcal R}}{\sqrt{J}}\right)\cdot\frac{\sqrt L}{\delta\,(q\,p(1-p))^2} = o_p(1),
    \end{equation*}
    by the assumption $d^6_{\mathcal A}d^2_{\mathcal R} = o(J_N)$. Combining with \eqref{eq:v-ratio},
    \begin{equation*}
        \frac{\hat V_{RI}}{\var(\hat\tau^{\phantom{1}}_{EARL})} = \frac{\hat V_{RI}}{v}\cdot\frac{v}{\tilde V}\stackrel{p}{\to}1.
    \end{equation*}

    \emph{Step 4: size of the test.} Under the null, $H_0^*$ implies Assumption~A\ref{a:4} (with all contributions constant) and $GATE=0$, so Theorem~\ref{th:asymptotic-normality} applies (its moment condition $\E[Y_a^4]\le L$ holds by assumption), making $\hat\tau_{EARL}$ asymptotically standard normal after standardisation by the square root of its variance. Writing
    \begin{equation*}
        \frac{\hat\tau^{\phantom{1}}_{EARL}}{\sqrt{\hat V_{RI}}} = \frac{\hat\tau^{\phantom{1}}_{EARL}}{\sqrt{\var\left(\hat\tau^{\phantom{1}}_{EARL}\right)}} \cdot \sqrt{\frac{\var\left(\hat\tau^{\phantom{1}}_{EARL}\right)}{\hat V_{RI}}}
    \end{equation*}
    and applying Slutsky's theorem gives $\hat\tau_{EARL}/\sqrt{\hat V_{RI}}\stackrel{d}{\to}\mathcal N(0,\,1)$, hence
    \begin{equation*}
        \Pr\left\{\left|\hat\tau^{\phantom{1}}_{EARL}\right| > z_{1-\alpha/2}\sqrt{\hat V_{RI}}\right\}\to \alpha,
    \end{equation*}
    completing the proof.
\end{proof}

\end{document}